\documentclass{article}
\usepackage[latin1]{inputenc}
\usepackage[english]{babel}%english
\usepackage{dsfont}
\usepackage[cyr]{aeguill}
\usepackage{amsmath,amssymb}
\usepackage{amsthm}
\usepackage{eurosym}
\usepackage{multicol}
\usepackage{color}
\usepackage{fancyhdr}
\usepackage[pdftex]{graphicx}
\usepackage[pdftex]{hyperref}
\usepackage{verbatim}
\usepackage{natbib}
\usepackage{setspace}
\usepackage{multirow}
\usepackage{stmaryrd}
\usepackage{bm} %pour mettre en gras les lettres grecques
\newcommand{\indic}{\mathds{1}} %indicatrice
\newcommand{\E}{\mathbb{E}} %esperance
\DeclareMathOperator{\Cov}{Cov} %covariance
\DeclareMathOperator{\tr}{tr} %trace

\DeclareMathOperator{\argmax}{argmax}

\theoremstyle{plain}
\newtheorem{thm}{Theorem}

\newtheorem{lem}{Lemma}

\providecommand{\keywords}[1]{\textbf{\textit{Keywords --}} #1}

\title{Composite likelihood inference of fractional Gaussian processes with sequentially optimal subset selection}%{Randomized composite marginal likelihood approach for the inference of multivariate fractional Brownian motions with application to rough volatilities}

\author{Mathis Fourreau$^{\text{a}}$, Matthieu Garcin$^{\text{b,c,}}$\footnote{Corresponding author: matthieu.garcin@m4x.org. \newline $^{\text{a}}$ ESILV, 92916 Paris La Défense, France. \newline $^{\text{b}}$ De Vinci Higher Education, De Vinci Research Center, Paris, France. \newline $^{\text{c}}$ This research benefited from the support of the Chaire ``Deep Finance Statistics'' between QRT, Ecole Polytechnique and its foundation.
\newline Acknowledgment: MG is grateful to Thomas Savini for his assistance in the initial steps of this project.}}

\date{\today}

\begin{document}

\maketitle

\begin{abstract}
The composite likelihood method reduces the computational cost of parameter estimation in time series by considering several subsets of observations instead of all observations at once. The asymptotic properties of this method are related to the Godambe information, an extension of the Fisher information that accounts for the dependence between subsets of observations. We aim to apply this method to linear Gaussian models, in particular fractional Brownian motion and fractional Gaussian noise. We derive theoretical expressions for their Fisher information and their Godambe information and deduce a subset selection design that sequentially maximizes the Godambe information. The size of the subsets then allows us to control the trade-off between estimation accuracy and computational cost. Through simulations, we compare this method with the method of moments and maximum likelihood estimation, and we apply it to real data, namely volatility series of a stock index and a wind speed time series.
\end{abstract}
%moins de 100 mots 

\keywords{Fractional Brownian motion, Godambe information, Hurst exponent, optimal design, pseudo-likelihood, quasi-likelihood.}%\vspace{0.3cm}
%five or six keywords or key phrases, arranged in alphabetical order

%\textbf{\textit{MSC codes}} -- 60G10, 60G15, 60G18, 60G22, 62M09, 62M10, 62P05

%\textbf{\textit{JEL codes}} -- C52, C58

\section{Introduction}

By replacing the full log-likelihood by a sum of lower-dimensional marginal log-likelihoods, the composite likelihood approach provides an efficient framework for parameter inference in a high-dimensional model~\cite{Lindsay,VRF}. It is particularly useful for time series exhibiting non-trivial serial dependence~\cite{NJK}. One of the challenges associated with this method lies in how to select subsets of observations that achieve a good trade-off between computational cost and estimation accuracy~\cite{LYS}. Research on this topic remains in its early stages, despite several interesting contributions~\cite{PM,MKR,HF}.

To address this question of optimal design, we study a simple time series model that is Gaussian and whose dependence structure is governed by a single parameter. We therefore focus our analysis on fractional Gaussian processes, which are widely used in many fields, including finance and meteorology. Depending on the value of the Hurst parameter, these processes may exhibit long-range dependence, characterized by slowly decaying correlations. However, when long sequences of discrete-time observations are available, maximum likelihood estimation becomes computationally prohibitive, as it requires inverting the autocovariance matrix and computing its determinant~\cite{Coeurjolly2000,GarcinEstimLamp}. As a result, many alternative methods, which are less accurate but computationally faster, have been developed~\cite{Coeurjolly2000,JHJ}, including method-of-moments estimators ~\cite{Coeurjolly2001,Garcin2017} or the Whittle estimator, which approximates the likelihood in the frequency domain~\cite{Beran,Coeurjolly2000,SYZ}.

We propose a composite likelihood approach to estimate the Hurst exponent of fractional Gaussian processes and study the effect of subset design on the resulting Godambe information. As one of our contributions, we derive a theoretical expression for this information, which is then used to construct a subset selection procedure. Due to combinatorial complexity, we adopt a sequential selection strategy in which, at each step, a subset of size $p$ is chosen to maximize the Godambe information conditionally on the previously selected subsets. The tuning parameter $p$ makes it possible to control the trade-off between computational cost and estimation accuracy; in particular the performance of the standard maximum likelihood estimator is recovered when $p$ equals the total number of observations. It is also possible to determine the minimal value of $p$ for which the proposed estimator outperforms the moment-based approach in terms of accuracy.

The rest of the paper is organized as follows. Section~\ref{sec:defstandardestim} briefly introduces fractional Gaussian models along with the benchmark estimation methods considered in this work, namely the maximum likelihood estimation and a moment-based approach. A comparison with Whittle estimation is beyond the scope of the present work, since our goal is not to compare spectral and time-domain methods, but rather to investigate the behavior of composite likelihood constructions in the time domain. The composite likelihood approach is presented and applied to fractional Gaussian processes in Section~\ref{sec:composite}. We also discuss the resulting design in this section. Sections~\ref{sec:simul} and~\ref{sec:appli} are respectively devoted to a simulation study and to a short application to real financial and climatic data, mainly intended to illustrate the practical relevance of the proposed approach. Section~\ref{sec:concl} concludes.

\section{Definition and standard estimation of the fBm and of the fGn}\label{sec:defstandardestim}

\subsection{Definition of fractional Gaussian processes}

Given the Hurst parameter $H$, the fractional Brownian motion (fBm) $X^H_t$ is defined, when $H\in(1/2,1)$ (respectively $(0,1/2)$), as a fractional primitive (resp. derivative) of a standard Brownian motion, thus leading to a integral expression~\cite{MvN}. When $H=1/2$, we recover the Brownian motion. When working with discrete-time observations, it is however simpler to use an alternative definition. Indeed, a vector of observations is then a centered Gaussian vector, characterized by its autocovariance matrix, which is based on
\begin{equation}\label{eq:covFBM}
\E\left[X^H_sX^H_t\right]=\frac{\sigma^2}{2}\left(|s|^{2H}+|t|^{2H}-|t-s|^{2H}\right),
\end{equation}
where $s,t\in\mathbb R$ and $\sigma^2$ is a scale parameter, equal to 1 in the standard specification.

The fractional Gaussian noise (fGn) is defined as an increment of an fBm on a time step equal to 1: $Y^H_t=X^H_{t+1}-X^H_t$. It is thus easy to deduce the autocovariance of this process from equation~\eqref{eq:covFBM}:
\begin{equation}\label{eq:covFGN}
\E\left[Y^H_sY^H_t\right]=\frac{\sigma^2}{2}\left(|t-s-1|^{2H}-2|t-s|^{2H}+|t-s+1|^{2H}\right).
\end{equation}

After equations~\eqref{eq:covFBM} and~\eqref{eq:covFGN}, we conclude that the fGn only is stationary. It can also be proved that the fBm only is self-similar. Moreover, for the two models, the entire serial dependence structure is encoded in a single parameter, $H$~\cite{BG}. We therefore focus on the estimation of $H$ alone, rather than on the joint estimation of $(H,\sigma^2)$.

\subsection{Moment-based inference method for fractional Gaussian processes}\label{sec:moment}

Let $\textbf{V}=(X_1,...,X_N)'$ be a vector of observations, whose statistical model is an fBm of Hurst exponent $H$, and where we use the notation $M'$ for the transpose matrix of $M$.

For a given time scale $m\in\llbracket 0,M\rrbracket$, with $M\leq N-1$, the variance of an increment of an fBm is $\sigma^2 m^{2H}$. Noting $\mathcal E(m)=\sum_{i=1}^{N-m}(X_{i+m}-X_i)^2/(N-m)$ the empirical variance, the moment-based estimator of $H$ is obtained as the slope of the linear regression of $\{\log\mathcal E(m)\}_{1\leq m\leq M}$ on $\{2\log m\}_{1\leq m\leq M}$. This estimator is strongly consistent and asymptotically normal when $H<3/4$~\cite{IL,Coeurjolly2001}. 

When $H$ is larger than $3/4$, the process exhibits stronger longe-range dependence and a second-order differencing scheme is required to mitigate it and recover standard asymptotic properties. We therefore define $\mathcal E_2(m)=\sum_{i=1}^{N-2m}(X_{i+2m}-2X_{i+m}+X_i)^2/(N-2m)$ and the new estimator of $H$ is the slope of the linear regression of $\{\log\mathcal E_2(m)\}_{1\leq m\leq M}$ on $\{2\log m\}_{1\leq m\leq M}$~\cite{IL,Coeurjolly2001}. This second-order estimator has however a lower accuracy than its first-order counterpart when $H$ is smaller than about 0.7~\cite[Section 5]{Coeurjolly2001}.

When the statistical model is an fGn, the simplest approach is to build first an fBm as a cumulative sum of the observations and to apply then the above method.

In what follows, we have implemented the method with a first-order difference scheme, with $M=5$, in line with recommendations from the literature~\cite[Section 5.2]{Coeurjolly2001}.

\subsection{Standard maximum likelihood approach for fractional Gaussian processes}\label{sec:mle}

We still consider the vector of observations $\textbf{V}=(X_1,...,X_N)'$ with an fBm of Hurst exponent $H$ as a statistical model. Its log-likelihood in the fBm framework is:
\begin{equation}\label{eq:loglik}
\ell (\textbf V;H,\sigma^2)=-\frac{1}{2}\ln\left(\det\left[\sigma^2\bm\Sigma\right]\right)-\frac{N}{2}\ln(2\pi)-\frac{1}{2\sigma^2}\textbf{V}'\bm\Sigma^{-1}\textbf{V},
\end{equation}
where $\bm\Sigma$ is the theoretical covariance matrix of an fBm of Hurst exponent $H$ and unit scale parameter, based on equation~\eqref{eq:covFBM}. Introducing the differentiation operator $\partial_{x}$ with respect to a variable $x$, we have 
$$\partial_{\sigma^2}\ell (\textbf V;H,\sigma^2) = -\frac{N}{2\sigma^2}+ \frac{1}{2\sigma^4}\textbf{V}'\bm\Sigma^{-1}\textbf{V},$$
which is equal to zero for $\sigma^2=\widehat{\sigma}^2_{\ell}$, the maximum likelihood estimator:
$$\widehat{\sigma}^2_{\ell}=\frac{1}{N}\textbf{V}'\bm\Sigma^{-1}\textbf{V}.$$
We can then plug this estimator in equation~\eqref{eq:loglik}, which now depends on the data $\textbf V$ and on the only parameter $H$:
\begin{equation}\label{eq:llvect}
\ell (\textbf V;H)=-\frac{N}{2}\ln\left(\frac{\textbf{V}'\bm\Sigma^{-1}\textbf{V}}{N}\right) -\frac{1}{2}\ln\left(\det\left[\bm\Sigma\right]\right)-\frac{N}{2}\left(\ln(2\pi)+1\right).
\end{equation}
This plug-in makes it possible to estimate $H$ easily as the solution of a one-dimensional optimization problem:
$$\widehat H_{\ell} = \underset{H\in(0,1)}{\arg\max}\ \ell (\textbf V;H).$$

We note that if we use instead the statistical model of an fGn, the above equations still apply if one replaces the matrix $\bm\Sigma$ by $\bm\Gamma$, the theoretical covariance matrix of an fGn of Hurst exponent $H$ and unit variance parameter, determined with the help of equation~\eqref{eq:covFGN}.

Whatever the above Gaussian model, basic algorithms for calculating the likelihood have a complexity which is cubic in the number of observations $N$. This stems from both the matrix inversion and the computation of the determinant.

Noting that $\bm\Gamma$ is Toeplitz, Hermitian, and positive definite, some faster algorithms exist for matrix inversion and determinant calculation in the fGn model~\cite{Coeurjolly2000}. Regarding the matrix inversion, the Trench algorithm, a recursive procedure based on a block decomposition, has a complexity in $\mathcal O(N^2)$~\cite{Trench,Zohar}. The calculation of the determinant also reaches a quadratic complexity~\cite{Cinkir} but, if done jointly with the matrix inversion, it only requires $N$ additional multiplications~\cite{Trench,Zohar}. The Trench algorithm thus leads to a faster computation of the likelihood in the case of an fGn. In the numerical experiments of this paper, we will use this algorithm to build a benchmark estimator.

Though the complexity of the Trench algorithm is lower than the one of basic algorithms, its multiple use in the maximization of the likelihood leads to prohibitive calculation times as soon as one studies time series of several thousand observations. One might thus be tempted to use superfast Toeplitz inversion methods. These methods, often based on a fast Fourier transform, make it possible to reach a complexity in $\mathcal O(N\log^2N)$. However, contrary to the Trench algorithm, they only provide an approximation of the inverse matrix and they are known to be numerically unstable~\cite{AG,KMS,PT}.

The fast algorithms detailed above require the matrix to be Toeplitz, which is the property of $\bm\Gamma$ but not of $\bm\Sigma$. In other words, the series of observations of the fBm is to be differentiated first in order to work with an fGn. Observations of an fBm or an fGn not equally distributed in time would also constrain us to use only basic and slow algorithms.

\section{Composite marginal likelihood method}\label{sec:composite}

In what follows, we first present the general outline of the composite likelihood method, in order to fix the notation, and we then apply it to fBm and fGn, deriving the Fisher information and the Godambe information. We finally draw numerical conclusions from these theoretical developments to properly select the subsets of observations to be used in the composite likelihood and address as well the specific question of the time scale at which the fGn is observed. 

\subsection{Method outline}\label{sec:MethodOutline}

In this paragraph, we place ourselves in a general framework and we will apply the method to fBm and fGn afterwards. We thus summarize below the main elements of the composite likelihood approach. More details can be found in some well-known papers on the subject~\cite{LYS,VRF}.

Let $\textbf{V}$ be a vector of $N$ observations, following a statistical model of parameter $\bm\theta\in\Theta$, where $\Theta$ is a subset of $\mathbb R^d$. We select a number $N_{\text{cl}}$ of classes, each consisting in a subset of observations of size $p$, contained in the vector $\textbf{V}_k=\mathbf{S}_k\textbf{V}$ for $k\in\llbracket  1,N_{\text{cl}}\rrbracket$, where $\mathbf{S}_k$ is a binary matrix of size $p\times N$. If $\mathbf V_k$ contains the observations of index $k_1,...,k_p$, the element of $\mathbf{S}_k$ in position $(i,j)\in \llbracket 1,p\rrbracket\times\llbracket 1,N\rrbracket$ is $\indic_{j=k_i}$. The marginal composite log-likelihood of $\textbf{V}$, based on the sub-vectors $\textbf{V}_1,...,\textbf{V}_{N_{\text{cl}}}$, is defined by:
$$\mathcal C(\bm\theta)=\sum_{k=1}^{N_{\text{cl}}}\ell(\textbf{V}_k;\bm\theta),$$
where $\ell(\textbf{V}_k;\bm\theta)$ is the log-likelihood of the sub-vector $\textbf{V}_k$.

For each sub-vector $\textbf{V}_k$, one determines the score $\mathbf u(\textbf{V}_k;\bm\theta)=\nabla \ell(\textbf{V}_k;\bm\theta)$, where $\nabla$ is the gradient operator with respect to $\bm\theta$. By stacking all the scores together, we obtain the function $u:\bm\theta\mapsto(\mathbf u(\textbf{V}_1;\bm\theta)',...,\mathbf u(\textbf{V}_{N_{\text{cl}}};\bm\theta)')'$, which outputs a vector $\mathbf{u}(\bm\theta)$ of size $dN_{\text{cl}}$. One then considers a weighted average of the coordinates of $\mathbf{u}(\bm\theta)$, $\mathbf{\mathcal U}(\bm\theta)=\mathbf{W}(\bm\theta)\mathbf{u}(\bm\theta)$, where the weight matrix $\mathbf{W}(\bm\theta)$ is of dimension $d\times dN_{\text{cl}}$ and does not depend on the data. In the case of equally-weighted sub-scores, the weight is simply $\mathbf W(\bm\theta)=\mathbf W_{I}$, defined by a number $N_{\text{cl}}$ of $d\times d$ identity matrices placed side by side. This specific choice of weight matrix leads to a score which derives naturally from the composite likelihood because $\mathbf W_{I}\mathbf u(\bm\theta)=\nabla\mathcal C(\bm\theta)$.

Like the scores, it is possible to calculate the Fisher information matrix for each sub-vector, $\mathbf{\mathcal I}(\textbf{V}_k;\bm\theta)=\E\left[\mathbf u(\textbf{V}_k;\bm\theta)\mathbf u(\textbf{V}_k;\bm\theta)'\right]$, and to stack these matrices in the following matrix of size $dN_{\text{cl}}\times d$:
$$\mathcal F(\bm\theta)=\left(\begin{array}{c}
\mathbf{\mathcal I}(\textbf{V}_k;\bm\theta) \\
\vdots \\
\mathbf{\mathcal I}(\textbf{V}_{N_{\text{cl}}};\bm\theta)
\end{array}\right).$$
The Fisher information of each sub-vector can also be found in the diagonal $d\times d$ blocks of the covariance matrix of the stacked scores, $\mathcal V(\bm\theta)=\E\left(\mathbf u(\bm\theta)\mathbf u(\bm\theta)'\right)$, which is of size $dN_{\text{cl}}\times dN_{\text{cl}}$. In the case where the sub-scores are uncorrelated, the off-diagonal blocks of the matrix $\mathcal V(\bm\theta)$ are null matrices. 

However, in general, scores of different sub-vectors are correlated and one uses the Godambe information instead of the Fisher information. Given the weight matrix $\mathbf W(\bm\theta)$, the Godambe information is defined by the following $d\times d$ matrix:
$$\mathcal J_{\mathbf W(\bm\theta)}((\textbf{V}_k)_{k\in\llbracket  1,N_{\text{cl}}\rrbracket};\bm\theta) = \mathbf W(\bm\theta)\mathcal F(\bm\theta)\left[\mathbf W(\bm\theta)\mathcal V(\bm\theta)\mathbf W(\bm\theta)'\right]^{-1}\left[\mathbf W(\bm\theta)\mathcal F(\bm\theta)\right]'.$$
In what follows, we will always use $\mathbf W(\bm\theta)=\mathbf W_{I}$ and we will simply use the notation  $\mathcal J$ for $\mathcal J_{\mathbf W_I}$. We will also omit the argument $(\textbf{V}_k)_{k\in\llbracket  1,N_{\text{cl}}\rrbracket}$ in $\mathcal J$ when not necessary for understanding. 

The usefulness of the Godambe information instead of the traditional Fisher information appears in the asymptotic theory related to composite likelihoods. For a vector of $N$ observations $\textbf V$, the composite maximum likelihood estimator of $\bm\theta$ is
$$\widehat {\bm\theta}_{\mathcal C} = \underset{\bm\theta\in\Theta}{\arg\max}\ \mathcal C(\bm\theta).$$
If we now suppose we are given $n$ independent copies of $\textbf V$, the composite maximum likelihood estimator $\widehat{\bm\theta}_{\mathcal C,n}$ is the $\bm\theta$ maximizing the sum of the composite likelihoods over all the $n$ independent vectors. Then, $\widehat{\bm\theta}_{\mathcal C,n}$ is asymptotically normal with a variance equal to the inverse of the Godambe information $\mathcal J(\bm\theta)$ related to the $N_{\text{cl}}$ sub-vectors of $\textbf V$:
\begin{equation}\label{eq:CLT_Godambe}
\sqrt{n}\left(\widehat{\bm\theta}_{\mathcal C,n} -\bm\theta\right)\rightsquigarrow \mathcal N_d\left(0,\mathcal J(\bm\theta)^{-1}\right),
\end{equation}
where $\rightsquigarrow$ stands for the convergence in distribution, $\mathcal N_d$ for the $d$-dimensional Gaussian distribution with mean and variance as parameters~\cite{VRF}. In practical applications, we will often have $n=1$. Therefore, even if the size $N$ of $\mathbf V$ is very high, the above asymptotic framework does not apply. However, for a fixed size $p$ of each sub-vector and $n=1$, when the number of sub-vectors $N_{\text{cl}}$ and thus the number of observations $N$ tend to infinity, the estimator $\widehat {\bm\theta}_{\mathcal C}$ may asymptotically follow a Gaussian distribution, provided that the dependence between the scores of the sub-vectors are not too strongly dependent to each other~\cite{VRF,BeG}. In that case, the inverse Godambe information also plays an important role in the asymptotic variance of the estimator. This will be confirmed in Section~\ref{sec:Design}, in the case of fGn, for the selected design. Moreover, in the non-asymptotic framework, which is of practical interest and which we study in the simulation study in Section~\ref{sec:simul}, the Godambe information is a reasonable criterion to be maximized, when selecting sub-vectors in the definition of the composite likelihood.

Indeed, two particular situations illustrate the intuition behind the Godambe information. First, when the scores of different sub-vectors are all uncorrelated, the Godambe information is equal to the sum of the Fisher information measures over all the sub-vectors. Maximizing the Godambe information in this case is equivalent to maximizing the average Fisher information over all the sub-vectors of observations. Second, when the scores are correlated and $d=1$, which is the framework later deepened in this article with the fBm and fGn models, the Godambe information is:
\begin{equation}\label{eq:GodambeUnSeulParam}
\mathcal J((\textbf{V}_k)_{k\in\llbracket  1,N_{\text{cl}}\rrbracket};\bm\theta) = \frac{\left[\sum_{k=1}^{N_{\text{cl}}}\mathcal I(\textbf{V}_k;\bm\theta)\right]^2}{\sum_{k=1}^{N_{\text{cl}}}\sum_{j=1}^{N_{\text{cl}}}\E\left(\mathbf u(\textbf{V}_k;\bm\theta)\mathbf u(\textbf{V}_j;\bm\theta)\right)}.
\end{equation}
Starting from an arbitrary set of sub-vectors of observations, if one wants to select an additional sub-vector maximizing the Godambe information, equation~\eqref{eq:GodambeUnSeulParam} shows that this sub-vector should have a high intrinsic Fisher information together with a score weakly correlated with the ones of the initial sub-vectors. 

\subsection{Information for fractional Gaussian processes}\label{sec:infoFG_Gaussian}

\subsubsection{Fisher information}\label{sec:FisherGaussian}

We are interested in the expression of the Fisher information of a sub-vector $\mathbf{V}_k=(Y_{t_{k,1}},...,Y_{t_{k,p}})'$ following a Gaussian distribution of theoretical covariance matrix $\sigma^2 \mathbf R_k$, where $\mathbf R_k$ depends on a single parameter $H$. As a particular case, we can apply this expression to an fBm (respectively an fGn), in which case $\mathbf R_k=\bm \Sigma_k$ (resp. $\bm \Gamma_k$), which is the theoretical covarance matrix of a vector of observations of an fBm (resp. fGn) with unit scale at times $t_{k,1},...,t_{k,p}$.

The likelihood of $\mathbf{V}_k$ follows equation~\eqref{eq:loglik}, with $\mathbf R_k$ instead of $\bm \Sigma$, and its score is 
\begin{equation}\label{eq:scoreTrace}
\partial_H\ell(\mathbf{V}_k;H,\sigma^2)=-\frac{1}{2}\partial_H\ln\left(\det(\mathbf R_k)\right)-\frac{1}{2\sigma^2}\mathbf{V}_k'\partial_H\mathbf R_k^{-1}\mathbf{V}_k.
\end{equation}
Equation~\eqref{eq:llvect} provides an alternative expression of the likelihood without the parameter $\sigma^2$ but the expression of the score deduced from this equation is much more complex. Therefore, we focus on the score obtained in equation~\eqref{eq:scoreTrace}, expression that we simplify below. First, we know that
$$\left(\partial_H\mathbf R_k^{-1}\right)\mathbf R_k=\partial_H(\mathbf R_k^{-1}\mathbf R_k)-\mathbf R_k^{-1}\partial_H\mathbf R_k=-\mathbf R_k^{-1}\partial_H\mathbf R_k,$$
so that
\begin{equation}\label{eq:IPPscore}
\partial_H\mathbf R_k^{-1}=-\mathbf R_k^{-1}\mathbf R_{k,H}\mathbf R_k^{-1},
\end{equation}
where $\mathbf R_{k,H}$ is the derivative of $\mathbf R_k$ with respect to $H$. Second, we note that, for any square matrix $\mathbf M$ of size $p\times p$, we have:
\begin{equation}\label{eq:scoreVMV}
\E\left[\mathbf{V}_k'\mathbf M\mathbf{V}_k\right]=\sum_{i=1}^p\sum_{j=1}^p (\mathbf M)_{ij}\E[Y_{t_i}Y_{t_j}]=\sum_{i=1}^p\sum_{j=1}^p (\mathbf M)_{ij}\sigma^2(\mathbf R_k)_{ji}=\sigma^2\tr(\mathbf M\mathbf R_k).
\end{equation}
Third, since the expected value of the score is zero, equations~\eqref{eq:scoreTrace}, \eqref{eq:IPPscore}, and~\eqref{eq:scoreVMV} provide:
$$\partial_H\ln\left(\det(\mathbf R_k)\right)=\frac{1}{\sigma^2}\E\left[\mathbf{V}_k'\mathbf R_k^{-1}\mathbf R_{k,H}\mathbf R_k^{-1}\mathbf{V}_k\right]=\tr(\mathbf R_k^{-1}\mathbf R_{k,H}).$$
Finally, the score follows
\begin{equation}\label{eq:scorefinal_trace}
\partial_H\ell(\mathbf{V}_k;H,\sigma^2) = -\frac{1}{2}\tr(\mathbf R_k^{-1}\mathbf R_{k,H})+\frac{1}{2\sigma^2}\mathbf{V}_k'\mathbf R_k^{-1}\mathbf R_{k,H}\mathbf R_k^{-1}\mathbf{V}_k.
\end{equation}
This result is evoked for instance in~\cite{SCA,ACS}, with elements of an alternative proof in~\cite[p308 and eq.8.6]{Harville}. We can now write the Fisher information for $H$:
\begin{equation}\label{eq:infoFisher_trace}
\begin{array}{ccl}
\mathcal I(\mathbf{V}_k;H) & = & -\E\left(\partial^2\ell(\mathbf{V}_k;H,\sigma^2)/\partial H^2\right) \\
& = & \frac{1}{2}\left[\partial_H \tr\left(\mathbf R_k^{-1}\mathbf R_{k,H}\right) - \frac{1}{\sigma^2}\E\left(\mathbf{V}_k'\partial_H\left(\mathbf R_k^{-1}\mathbf R_{k,H}\mathbf R_k^{-1}\right)\mathbf{V}_k\right)\right] \\
& = & \frac{1}{2}\left[\partial_H\tr\left(\mathbf R_k^{-1}\mathbf R_{k,H}\right) - \tr\left(\partial_H\left(\mathbf R_k^{-1}\mathbf R_{k,H}\mathbf R_k^{-1}\right)\mathbf R_k\right)\right] \\
& = & -\frac{1}{2}\tr\left(\mathbf R_k^{-1}\mathbf R_{k,H}\left(\partial_H\mathbf R_k^{-1}\right)\mathbf R_k\right) \\
& = & \frac{1}{2}\tr\left(\mathbf R_k^{-1}\mathbf R_{k,H}\mathbf R_k^{-1}\mathbf R_{k,H}\right),
\end{array}
\end{equation}
where we used equation~\eqref{eq:scorefinal_trace} in the second line, equation~\eqref{eq:scoreVMV} in the third line, an expansion and a simplification of the derivative of the product of $\mathbf R_k^{-1}\mathbf R_{k,H}$ with $\mathbf R_k^{-1}$ in the fourth line, and equation~\eqref{eq:IPPscore} in the last line. This simple result is also evoked without a proof in~\cite{AW}.

As an illustration, we now focus on the case where $p=2$. Theorem~\ref{thm:Fisher_fGn} provides a general expression for the Fisher information of a centered Gaussian pair with a correlation matrix which depends on a single parameter $H$. Then, assuming that the observed pair follows an fGn, we also have an approximation of this information.

\begin{thm}\label{thm:Fisher_fGn}
Let $t\in\mathbb R$, $\tau>0$, and $(Y_{t}, Y_{t+\tau})$ be a pair of observations of a centered stationary Gaussian process, such that their correlation $\rho_{H,\tau}$ depends on a parameter $H\in\mathbb R$. The Fisher information of this vector writes
\begin{equation}\label{eq:Fisher_Gauss}
\mathcal I(Y_{t},Y_{t+\tau};H)=\frac{(1+\rho_{H,\tau}^2)(\partial_H\rho_{H,\tau})^2}{(1-\rho_{H,\tau}^2)^2}.
\end{equation}
When $(Y_t)_{t\in\mathbb R}$ is an fGn and $\tau>1$, the Fisher information can also be written as:
$$\mathcal I(Y_t, Y_{t+\tau};H) = \frac{1+a_{\tau}^2}{(1-a_{\tau}^2)^2}b_{\tau}^2 + \mathcal R_{\tau},$$
with $a_{\tau}=H(2H-1)\tau^{2H-2}$, $b_{\tau}=2a_{\tau}\ln \tau+(4H-1)\tau^{2H-2}$, and
$$|\mathcal R_{\tau}|
\le
\tau^{2H-4}
\Bigg[
\frac{1+q_{\tau}^2}{(1-q_{\tau}^2)^2}
\,M_{\tau}
\Big(2|b_{\tau}|+\tau^{2H-4}M_{\tau}\Big)
+
b_{\tau}^2\frac{\alpha_{g,3}q_{\tau}(3+q_{\tau}^2)}{24(1-q_{\tau}^2)^3}
\Bigg],$$
when $\tau$ and $H$ are such that $q_{\tau}<1$, where
$$\left\{\begin{array}{ccl}
q_{\tau} & = & \left(\tau^{2H-4}\alpha_{g,3}/48\right) + |a_{\tau}| \\
M_{\tau} & = & \left(\alpha_{g,3}\ln(\tau) + \alpha_{h,3}\right)/24 \\
\alpha_{g, n} & = & 2\left(1-\tau^{-1}\right)^{2H-n-1}|A_{n+1}| \\
\alpha_{h,n} & = & 2\left(1-\tau^{-1}\right)^{2H-n-1}\left(\left|A_{n+1}\ln{\left(1-\tau^{-1}\right)}\right|+|B_{n+1}|\right),
\end{array}\right.$$
with the sequences $A_n$ and $B_n$ defined by $A_0=1$, $B_0=0$, and
\begin{equation}\label{eq:FisherGauss_AB}
\left\{\begin{aligned}
A_{n+1} &= (2H-n)A_n \\
B_{n+1} &= (2H-n)B_n + A_n.
\end{aligned}\right.
\end{equation}
\end{thm}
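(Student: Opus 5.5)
The plan has two parts: first an exact computation for a general correlated Gaussian pair via equation~\eqref{eq:infoFisher_trace}, then a Taylor expansion of the fGn correlation in $1/\tau$ with explicit Lagrange remainders.

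\textbf{Exact formula.} Take $\mathbf R_k=\begin{pmatrix}1&\rho\\ \rho&1\end{pmatrix}$ with $\rho=\rho_{H,\tau}$. The variance $\sigma^2$ does not depend on $H$ (stationarity), so only the correlation varies. Then
$$\mathbf R_{k,H}=\partial_H\rho\begin{pmatrix}0&1\\ 1&0\end{pmatrix},\qquad \mathbf R_k^{-1}=(1-\rho^2)^{-1}\begin{pmatrix}1&-\rho\\ -\rho&1\end{pmatrix}.$$
It follows that
$$\mathbf R_k^{-1}\mathbf R_{k,H}=\frac{\partial_H\rho}{1-\rho^2}\begin{pmatrix}-\rho&1\\ 1&-\rho\end{pmatrix},$$
whose square has trace $2(1+\rho^2)(\partial_H\rho)^2/(1-\rho^2)^2$. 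Halving this, as equation~\eqref{eq:infoFisher_trace} prescribes, gives~\eqref{eq:Fisher_Gauss}.

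\textbf{Expansion of the fGn correlation.} By~\eqref{eq:covFGN} with unit variance,
$$\rho_{H,\tau}=\tfrac12\tau^{2H}g(1/\tau),\qquad g(x)=(1-x)^{2H}-2+(1+x)^{2H},$$
which is an even function of $x$. Taylor expansion of $g$ to order $3$ with Lagrange remainder uses the derivatives
$$\frac{d^n}{dx^n}(1\pm x)^{2H}=(\pm1)^nA_n(1\pm x)^{2H-n},$$
with $A_n$ as in~\eqref{eq:FisherGauss_AB}. Since $g(0)=g'(0)=0$ and $g''(0)=2A_2=2\cdot2H(2H-1)$, we get $\rho=a_\tau+\epsilon_\tau$. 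For $0<\xi<x=1/\tau$, the third derivative satisfies $|g^{(3)}(\xi)|\le\alpha_{g,3}$, because the worst factor is $(1-\tau^{-1})^{2H-4}$. This gives $|\epsilon_\tau|\le\tau^{2H-4}\alpha_{g,3}/48$ (the $1/2\cdot1/3!\cdot\ldots$ bookkeeping may shift the constant), and hence $|\rho|\le q_\tau$.

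For $\partial_H\rho$, differentiate in $H$. The $H$-derivative of the $n$-th derivative introduces $\ln(1\pm x)$ factors, and $\partial_H A_n$ corresponds to the $B_n$ recursion, since $\partial_H\big((2H-n)A_n\big)$ has the same structure. Also $\partial_H\tau^{2H}=2\ln\tau\,\tau^{2H}$. The leading term is $\partial_H a_\tau=b_\tau$, and the remainder $\delta_\tau=\partial_H\rho-b_\tau$ is bounded by $\tau^{2H-4}M_\tau$, with $\alpha_{g,3}\ln\tau$ coming from the $\tau^{2H}$ derivative and $\alpha_{h,3}$ from $\partial_H g^{(3)}$.

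\textbf{Remainder assembly.} Write $F(\rho)=(1+\rho^2)/(1-\rho^2)^2$. Then
$$\mathcal I-F(a)b^2=F(\rho)\big[(b+\delta)^2-b^2\big]+b^2\big[F(\rho)-F(a)\big].$$
The first term is bounded by $F(q_\tau)\,|\delta|\,(2|b|+|\delta|)$, using that $F$ is increasing in $|\rho|$ on $[0,1)$; this requires $q_\tau<1$. The second is handled by the mean value theorem: $F'(r)=2r(3+r^2)/(1-r^2)^3$ is increasing, so $|F(\rho)-F(a)|\le F'(q_\tau)\,|\epsilon|$. Inserting $|\epsilon|\le\tau^{2H-4}\alpha_{g,3}/48$ yields the factor $\alpha_{g,3}q_\tau(3+q_\tau^2)/(24(1-q_\tau^2)^3)$, which matches the statement.

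\textbf{Main obstacle.} The main difficulty is the careful bound on $\partial_H$ of the Taylor remainder. One must justify differentiating the remainder in $H$, for instance via the integral form of the remainder, and track the $\ln(1-\tau^{-1})$ and $B_n$ terms so that the constants match $\alpha_{h,3}$ exactly. The condition $\tau>1$ is what ensures $1-\xi>0$, so that these terms are well defined.
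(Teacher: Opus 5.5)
Your overall architecture is the paper's. The trace computation for the pair, the splitting $\rho_{H,\tau}=a_\tau+r_\tau$ and $\partial_H\rho_{H,\tau}=b_\tau+s_\tau$, and the final assembly through $F(x)=(1+x^2)/(1-x^2)^2$ (its monotonicity on $[0,1)$ and the mean value theorem with $F'$) are exactly what is done there.

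The gap is in the order of the Taylor remainder, and it is not a matter of constants. A third-order Lagrange remainder with the uniform bound $|g^{(3)}(\xi)|\le\alpha_{g,3}$ only yields
\[
|\epsilon_\tau|\le\frac{\tau^{2H}}{2}\cdot\frac{\alpha_{g,3}}{6}\,\tau^{-3}=\frac{\alpha_{g,3}}{12}\,\tau^{2H-3}.
\]
This is a full power of $\tau$ weaker than $\frac{\alpha_{g,3}}{48}\tau^{2H-4}$, and the same loss affects your bound on $\delta_\tau$. So neither $|\rho_{H,\tau}|\le q_\tau$, nor $|\delta_\tau|\le\tau^{2H-4}M_\tau$, nor the overall prefactor $\tau^{2H-4}$ follows from what you wrote. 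Your final ``matches the statement'' inherits this. Your justification that ``the worst factor is $(1-\tau^{-1})^{2H-4}$'' is a symptom: $g^{(3)}$ only involves $A_3$ and $(1\pm x)^{2H-3}$. The exponent $2H-4$ and the coefficient $|A_4|$ in $\alpha_{g,3}$ only appear after one more differentiation.

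The missing idea is the evenness you noticed but did not exploit. Both $g$ and $h(x)=(1-x)^{2H}\ln(1-x)+(1+x)^{2H}\ln(1+x)=\tfrac12\partial_H g(x)$ are even, so $g^{(3)}$ and $h^{(3)}$ are odd and vanish at $0$. Applying the mean value theorem to $y\mapsto A_3y^{2H-3}$ and $y\mapsto y^{2H-3}(A_3\ln y+B_3)$ on $[1-t,1+t]$ gives $|g^{(3)}(t)|\le\alpha_{g,3}t$ and $|h^{(3)}(t)|\le\alpha_{h,3}t$ on $[0,1/\tau]$; this is the paper's Lemma~\ref{lem:deriv_g}. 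With the integral remainder you then get
\[
|r_\tau|\le\frac{\tau^{2H}}{4}\,\alpha_{g,3}\int_0^{1/\tau}t\left(\frac1\tau-t\right)^2dt=\frac{\alpha_{g,3}}{48}\,\tau^{2H-4}.
\]
Equivalently, expand to fourth order, using $g^{(3)}(0)=0$ and $\sup_{[0,1/\tau]}|g^{(4)}|\le\alpha_{g,3}$.

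This also removes what you call the main obstacle. You never need to differentiate a remainder in $H$, because $\partial_H\rho_{H,\tau}=\tau^{2H}\left[\ln(\tau)\,g(1/\tau)+h(1/\tau)\right]$ is explicit. Expand $h$ on its own, using $h(0)=h'(0)=h^{(3)}(0)=0$ and $h''(0)=2B_2=2(4H-1)$. This gives
\[
s_\tau=2r_\tau\ln\tau+\frac{\tau^{2H}}{2}\int_0^{1/\tau}h^{(3)}(t)\left(\frac1\tau-t\right)^2dt,
\]
hence $|s_\tau|\le\frac{\alpha_{g,3}}{24}\tau^{2H-4}\ln\tau+\frac{\alpha_{h,3}}{24}\tau^{2H-4}=\tau^{2H-4}M_\tau$. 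Your remainder assembly then goes through verbatim.
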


The proof of Theorem~\ref{thm:Fisher_fGn} is postponed in Appendix~\ref{sec:proof_Fisher_fGn}.

Theorem~\ref{thm:Fisher_fGn} gives a general expression for the Fisher information. It does not depend on $\sigma^2$ but only on the correlation $\rho_{H,\tau}$ and on its derivative with respect to $H$, thus on the parameter $H$ and on the time lag $\tau$. In the case of an fGn, an explicit expression exists for these two terms, as it appears in the proof of the theorem, but Theorem~\ref{thm:Fisher_fGn} provides as well a simple approximation of the Fisher information, based on a Taylor expansion. This approximation makes interpretation easier. Indeed, when $\tau\rightarrow\infty$, the leading term of the Fisher information of the fGn is simply $b_{\tau}^2=(2H(2H-1)\ln(\tau)+4H-1)^2\tau^{4H-4}$. The information decreases as the two observations become further apart in time, with a faster decay for a smaller $H$.

Figure~\ref{fig:InfoPaireFgn} shows the Fisher information of a pair of observations of the fGn for a large range of values of $\tau$. We observe that the information is unbounded when $\tau\rightarrow 0$. It corresponds to the observation of the process in continuous time, a framework in which the estimation is much more accurate. But, in realistic applications, we only have access to discrete-time observations. If the sampling is defined by a minimal time step of 1, then the information is maximal for consecutive observations. For a finer sampling, the lag of 1 is a local maximum. Indeed, two consecutive observations of the fGn, distant of each other by $\tau<1$, would then correspond to overlapping increments of an fBm, meaning that they partly contain redundant information.

\begin{figure}[htbp]
	\centering
		\includegraphics[width=0.75\textwidth]{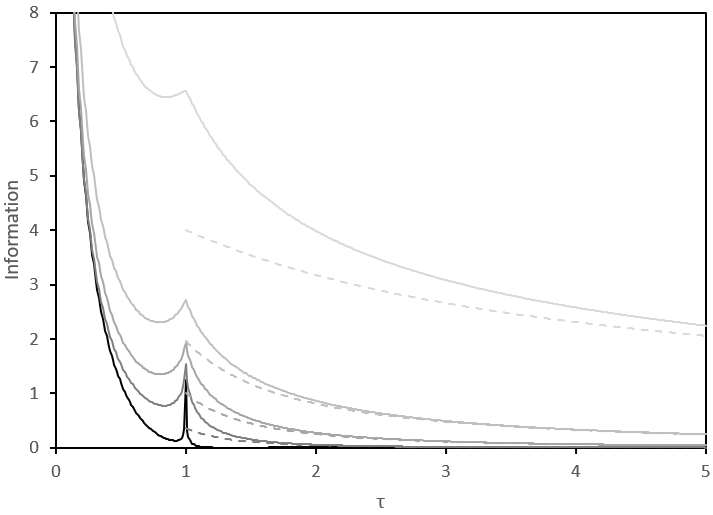} 
\begin{minipage}{0.7\textwidth}\caption{Fisher information for a pair of observations distant from each other by a duration $\tau$. The underlying model is an fGn of Hurst exponent equal to 0.25, 0.4, 0.5, 0.6, 0.75 (from the darker to the lighter curve). The dashed curves correspond to the asymptotic leading term $(2H(2H-1)\ln(\tau)+4H-1)^2\tau^{4H-4}$ obtained by Theorem~\ref{thm:Fisher_fGn}.}
	\label{fig:InfoPaireFgn}
\end{minipage}
\end{figure}

\subsubsection{Godambe information}

We apply the framework introduced in Section~\ref{sec:MethodOutline} to the case of a centered Gaussian process, characterized by a covariance matrix $\sigma^2\mathbf R$, where $\mathbf R$ depends on a single parameter $H$. From the vector of observations $\mathbf V$ of size $N$, we consider sub-vectors $\mathbf{V}_k=(Y_{t_{k,1}},...,Y_{t_{k,p}})'$, of covariance matrix $\sigma^2\mathbf R_k=\sigma^2\mathbf{S}_k\mathbf R\mathbf{S}_k'$, for $k\in\llbracket 1,N_{\text{cl}}\rrbracket$. 

We are interested in the estimation of $H$ only, and not in the estimation of $\sigma^2$. In order to determine the Godambe information associated with the set of sub-vectors and restricted to the estimation of $H$, we need the covariance between scores of all the sub-vectors. As scores, we in fact don't consider the score vectors but their entries related to the parameter $H$, namely $u(\mathbf V_k;H,\sigma^2)=\partial_H\ell(\mathbf{V}_k;H,\sigma^2)$. The parameter $\sigma^2$ appears in the expression of this entry but, as one can see in Theorem~\ref{thm:GodambeGauss}, it disappears in the expression of Godambe information.

\begin{thm}\label{thm:GodambeGauss}
The Godambe information of the centered Gaussian vectors $\mathbf V_1,...,\mathbf V_{N_{\text{cl}}}$ writes
$$\mathcal J((\mathbf V_k)_{k\in\llbracket 1,N_{\text{cl}}\rrbracket};H) = \frac{\left(\sum_{k=1}^{N_{\text{cl}}}\tr(\mathbf R_k^{-1}\mathbf R_{k,H}\mathbf R_k^{-1}\mathbf R_{k,H})\right)^2}{2\sum_{j=1}^{N_{\text{cl}}}\sum_{k=1}^{N_{\text{cl}}}\tr(\mathbf R_j^{-1}\mathbf R_{j,H}\mathbf R_j^{-1}\bm\Lambda_{j,k}\mathbf R_k^{-1}\mathbf R_{k,H}\mathbf R_k^{-1}\bm\Lambda_{k,j})},$$
where $\bm\Lambda_{j,k}=\mathbf{S}_j\mathbf R\mathbf{S}_k'$ and where $\mathbf R_{k,H}$ is the derivative of $\mathbf R_k$ with respect to $H$. In the case where the process is stationary, with a correlation function $\tau\mapsto\rho_{H,\tau}$, and where each sub-vector $\mathbf{V}_k = \left(Y_{t_k},\, Y_{t_k+\tau}\right)'$ is of size $p=2$, with $t_k\in\mathbb R$ and $\tau>0$, we have
\begin{equation}\label{eq:Godambe_p2}
\mathcal{J}\left((\mathbf{V_k})_{k \in  \llbracket 1, N_{cl}\rrbracket} ; H\right) =\frac{2{N_{\text{cl}}}^{2}(\partial_H\rho_{H,\tau})^{2}(1+\rho_{H,\tau}^{2})^2}{N_{\text{cl}} T_{0,H,\tau}+2 \sum_{1 \le k < j \le N_{\text{cl}}} T_{t_j-t_k,H,\tau}},
\end{equation}
with $T_{\theta,H,\tau}=2f_{H,\tau}(\theta,\theta+\tau)f_{H,\tau}(\theta,\theta-\tau)+f_{H,\tau}(\theta+\tau,\theta)^2+f_{H,\tau}(\theta-\tau,\theta)^2$, where $f_{H,\tau}(d_1,d_2)=-2\rho_{H,\tau}\rho_{H,d_1}+(1+\rho_{H,\tau}^2)\rho_{H,d_2}$.
\end{thm}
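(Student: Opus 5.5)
We start from equation~\eqref{eq:GodambeUnSeulParam}, valid for $d=1$ with $\mathbf W=\mathbf W_I$. The numerator is handled directly by equation~\eqref{eq:infoFisher_trace}: $\sum_k\mathcal I(\mathbf V_k;H)=\frac12\sum_k\tr(\mathbf R_k^{-1}\mathbf R_{k,H}\mathbf R_k^{-1}\mathbf R_{k,H})$. Its square yields a factor $1/4$.

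The main work is the denominator, namely the cross-covariances of the scores. By equation~\eqref{eq:scorefinal_trace}, the score is
$$u(\mathbf V_k)=-\tfrac12\tr(\mathbf A_k\mathbf R_k)+\tfrac{1}{2\sigma^2}\mathbf V_k'\mathbf A_k\mathbf V_k,\qquad \mathbf A_k=\mathbf R_k^{-1}\mathbf R_{k,H}\mathbf R_k^{-1}.$$
It is centered, and $\mathbf A_k$ is symmetric. Hence
$$\E[u(\mathbf V_j)u(\mathbf V_k)]=\frac{1}{4\sigma^4}\Cov(\mathbf V_j'\mathbf A_j\mathbf V_j,\mathbf V_k'\mathbf A_k\mathbf V_k).$$
We write $\mathbf V_j=\mathbf S_j\mathbf V$ and use Isserlis' theorem. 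For a centered Gaussian vector with cross-covariance $\E[\mathbf V_j\mathbf V_k']=\sigma^2\bm\Lambda_{j,k}$, the covariance of two quadratic forms with symmetric matrices is $2\sigma^4\tr(\mathbf A_j\bm\Lambda_{j,k}\mathbf A_k\bm\Lambda_{k,j})$. We obtain this by expanding the fourth moments entrywise: the pairing producing $\E[\cdot]\E[\cdot]$ cancels with the means, and the two remaining pairings coincide by symmetry. Thus $\E[u_ju_k]=\frac12\tr(\mathbf A_j\bm\Lambda_{j,k}\mathbf A_k\bm\Lambda_{k,j})$, which does not depend on $\sigma^2$. Plugging this into equation~\eqref{eq:GodambeUnSeulParam}, the ratio of $1/4$ over $1/2$ gives the factor $2$ in the denominator of the general formula.

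For the case $p=2$, stationarity gives $\mathbf R_k=\begin{pmatrix}1&\rho\\\rho&1\end{pmatrix}$ with $\rho=\rho_{H,\tau}$. This matrix is the same for every $k$, so $\mathbf A_k=\mathbf A$. We compute $\mathbf R^{-1}=(1-\rho^2)^{-1}\begin{pmatrix}1&-\rho\\-\rho&1\end{pmatrix}$ and $\mathbf R_H=\partial_H\rho\begin{pmatrix}0&1\\1&0\end{pmatrix}$. This yields
$$\mathbf A=\frac{\partial_H\rho}{(1-\rho^2)^2}\begin{pmatrix}-2\rho&1+\rho^2\\1+\rho^2&-2\rho\end{pmatrix}.$$
The numerator is $N_{\text{cl}}^2\left(\tr(\mathbf A\mathbf R\mathbf A\mathbf R)\right)^2$, and each trace equals $2\mathcal I$ from Theorem~\ref{thm:Fisher_fGn}.

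Next, with $\theta=t_j-t_k$, we have
$$\bm\Lambda_{j,k}=\begin{pmatrix}\rho_{H,\theta}&\rho_{H,\theta+\tau}\\\rho_{H,\theta-\tau}&\rho_{H,\theta}\end{pmatrix},$$
using evenness of $\rho$, and $\bm\Lambda_{k,j}=\bm\Lambda_{j,k}'$. The key observation is that each entry of $\mathbf A\bm\Lambda_{j,k}$ is, up to the factor $\partial_H\rho/(1-\rho^2)^2$, a value of $f_{H,\tau}(d_1,d_2)$, for instance $-2\rho\rho_{H,\theta}+(1+\rho^2)\rho_{H,\theta-\tau}=f(\theta,\theta-\tau)$. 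Writing $\mathbf B=\mathbf A\bm\Lambda_{j,k}$, we have $\mathbf A\bm\Lambda_{k,j}=\mathbf A\bm\Lambda_{j,k}'$, and this is the same matrix with the roles of $\theta\pm\tau$ swapped. Expanding $\tr(\mathbf B\,\mathbf A\bm\Lambda_{j,k}')$ should give exactly $T_{\theta,H,\tau}\,(\partial_H\rho)^2/(1-\rho^2)^4$.

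It remains to assemble the pieces. The numerator carries $(\partial_H\rho)^4(1+\rho^2)^2\cdot 4/(1-\rho^2)^4$, since $\tr(\mathbf A\mathbf R\mathbf A\mathbf R)=2(1+\rho^2)(\partial_H\rho)^2/(1-\rho^2)^2$. After cancellation this leaves $2N_{\text{cl}}^2(\partial_H\rho)^2(1+\rho^2)^2$ over $\sum_{j,k}T_{t_j-t_k}$. The diagonal terms give $N_{\text{cl}}T_0$. For the off-diagonal terms, we check that $T_{-\theta}=T_\theta$ by the symmetry of $f$ under $\theta\to-\theta$ combined with evenness of $\rho$, which yields the factor $2\sum_{k<j}$.

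The main obstacle is this $2\times2$ trace bookkeeping: matching each entry to the correct $f(\cdot,\cdot)$ and verifying the symmetry $T_{-\theta}=T_\theta$. I would do it by writing $\mathbf B$ explicitly and expanding the four products entry by entry.
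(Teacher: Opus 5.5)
Your proposal is correct and follows the paper's proof. You obtain the score covariance $\frac12\tr(\mathbf A_j\bm\Lambda_{j,k}\mathbf A_k\bm\Lambda_{k,j})$ from Gaussian fourth moments, insert it into equation~\eqref{eq:GodambeUnSeulParam}, and then do the explicit $2\times2$ computation using $T_{\theta,H,\tau}=T_{-\theta,H,\tau}$; the only variation is that you apply Isserlis directly, whereas the paper's Lemma~\ref{lem:EVAVVBV} reduces to i.i.d.\ normals via a Cholesky factor. One harmless slip: with $\theta=t_j-t_k$, the $(1,2)$ entry of $\bm\Lambda_{j,k}$ is $\rho_{H,\theta-\tau}$, so the matrix you wrote is actually $\bm\Lambda_{k,j}$; but since $\mathbf A_j=\mathbf A_k$, cyclicity of the trace makes $\tr(\mathbf A\bm\Lambda_{j,k}\mathbf A\bm\Lambda_{k,j})$ symmetric in $j,k$, so you still obtain $T_{\theta,H,\tau}$.
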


The proof of Theorem~\ref{thm:GodambeGauss} is postponed in Appendix~\ref{sec:proof_GodambeGauss}.

We note that, though $\mathbf R$ may be a huge matrix, as soon as $p$ is small, $\mathbf S_j\mathbf R \mathbf S_k'$ has a small size and is easy to calculate. Indeed, in practice, we don't interpret this formula as matrix products but as selection of particular entries of $\mathbf R$. In particular, when stacking $\mathbf V_1$ and $\mathbf V_2$, both of size $p$, one gets a vector of size $2p$ which has the following block covariance matrix:
$$\sigma^2\left(\begin{array}{cc}
\mathbf R_1 & \bm\Lambda_{1,2} \\
\bm\Lambda_{2,1} & \mathbf R_2
\end{array}\right)
=
\sigma^2\left(\begin{array}{cc}
\mathbf S_1\mathbf R \mathbf S_1' & \mathbf S_1\mathbf R \mathbf S_2' \\
\mathbf S_2\mathbf R \mathbf S_1' & \mathbf S_2\mathbf R \mathbf S_2'
\end{array}\right).$$

The performance of composite likelihood methods strongly depends on the choice of observation subsets. This topic is deepened in Section~\ref{sec:Design}, but it is worth noting that considering a very large number of subsets is not always the best design. Indeed, it has been shown that, in some cases, adding new subsets may lead to a less efficient estimator~\cite{XRX}. In a setting where the number $n$ of independent vectors of observations tends to infinity, such an increase of the variance of the estimator is equivalent to a decrease of the Godambe information, as stated in formula~\eqref{eq:CLT_Godambe}. For a single vector of $N$ observations, $(Y_0, Y_1,...,Y_{N-1})'$, and for data generated according to an fGn, we can also show that the Godambe information of a composite likelihood may decrease when one adds a new subset. Equation~\eqref{eq:GodambeUnSeulParam} clearly shows that this phenomenon arises when the additional subset contributes little marginal information at the numerator, while disproportionately increasing the denominator. In Figure~\ref{fig:GodambeDecr}, we focus on the case of a design by pairs, starting with 20 pairs $(Y_i,Y_{i+1})'$, for $i\in\llbracket 0,19\rrbracket$, and adding a new pair $(Y_{20},Y_{20+\tau})'$, with $\tau$ to be selected in $\llbracket -20,-2\rrbracket\cup\llbracket 1,N-21\rrbracket$. When the Hurst exponent $H$ is lower than 0.6, we observe that adding such a pair increases the Godambe information, whatever $\tau$. But for larger values of $H$, considering $\tau<0$ amounts to introducing a pair with a score much correlated to the one of the pairs already selected, so that the Godambe information, calculated after Theorem~\ref{thm:GodambeGauss}, may decrease. We also see in Figure~\ref{fig:GodambeDecr} that the maximum Godambe information is reached for $\tau=1$.

\begin{figure}[htbp]
	\centering
		\includegraphics[width=0.45\textwidth]{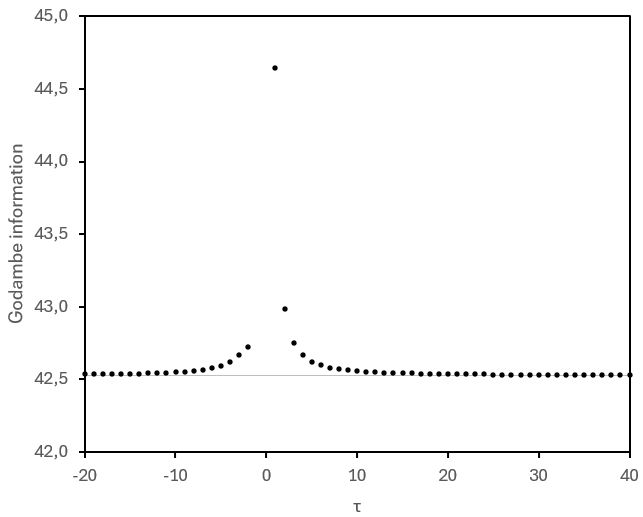}
		\includegraphics[width=0.45\textwidth]{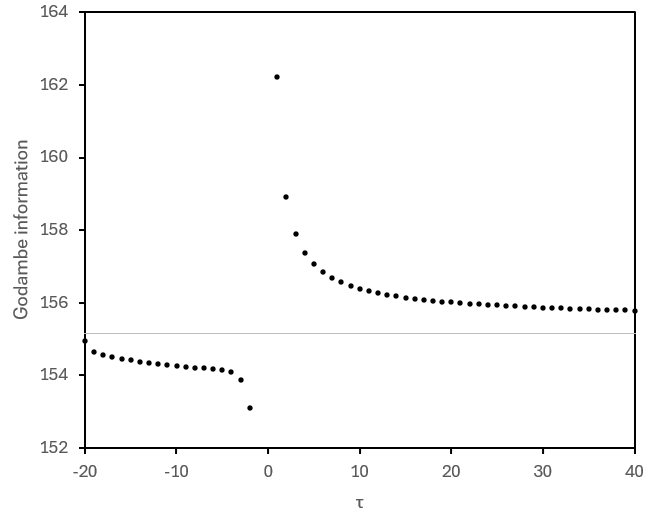} 
\begin{minipage}{0.7\textwidth}\caption{Godambe information of the pairs $(Y_i,Y_{i+1})'$, for $i\in\llbracket 0,19\rrbracket$, with (black dots) or without (grey line) the new pair $(Y_{20},Y_{20+\tau})'$, for $\tau\in\llbracket -20,-2\rrbracket\cup\llbracket 1,40\rrbracket$. The underlying model is an fGn of Hurst exponent equal to 0.55 (left graph) or 0.8 (right graph).}
	\label{fig:GodambeDecr}
\end{minipage}
\end{figure}

\subsection{Designing the subsets for fractional Gaussian processes}\label{sec:Design}

When using composite likelihood methods, a key step is the selection of the subsets of the $N$ observations on which the likelihood components are built. The choice of these subsets directly affects both the statistical efficiency of the estimator and the computational cost of the method. We restrict ourselves to subsets of common size $p$. The pairwise likelihood ($p=2$) is the most widespread choice, but considering higher values of $p$ generally improves the estimator.

The most natural design consists in considering all the subsets of size $p$. However, their number, $N!/p!(N-p)!$, grows rapidly with $N$, rendering the method computationally prohibitive.

Two main extensions are proposed in the literature. First, one can add weights to the components of the composite likelihood~\cite{VV,JL}. Optimal weights have been proposed but this method does not solve the problem of the computational cost~\cite{LYS}. Second, one can select fewer sub-vectors of size $p$, according to some appropriate rule. Defining a good selection rule is a difficult problem, which has been addressed through a heuristic criterion~\cite{PM}, random selection~\cite{MKR}, or $L^1$ regularization~\cite{HF}. 

We propose to construct a design tailored to the framework of an fGn, where observations are $(Y^H_0,Y^H_1,...,Y^H_{N-1})'$. Though we want to restrict the number of sub-vectors, we want all the observations to be used and thus to appear in at least one sub-vector. Therefore, we plan to build $N_{\text{cl}}\leq N$ vectors $\mathbf V_1$, ..., $\mathbf V_{N_{\text{cl}}}$, with the constraint that each vector $\mathbf V_k$ contains the observation $Y^H_{k-1}$. Other components of the vectors are to be determined by maximizing a given information. 

For the first vector, $\mathbf V_1$, we select, in addition to $t_{1,1}=0$, the times $t_{1,2},...,t_{1,p}$ maximizing the theoretical Fisher information, as expressed in Theorem~\ref{thm:Fisher_fGn}:
$$(t_{1,2},...,t_{1,p})=\argmax_{(t_2,...,t_p)\in\llbracket 1,N-1\rrbracket,t_2\neq ...\neq t_p} \mathcal I((Y^H_0,Y^H_{t_2},...,Y^H_{t_p})';H).$$
The optimization is a simple combinatorial optimization for $N=60$. We have chosen this small value of $N$ to limit the computational cost of the optimization, but it is not much restrictive since we know, from Section~\ref{sec:infoFG_Gaussian}, that distant observations lead to a small Fisher information. The result may depend as well on $H$ and $p$, but we have repeated the procedure for a large range of Hurst exponents in $[0,1]$ and for $p\in\llbracket 2,5\rrbracket$. We always get the same result, namely consecutive observations: $t_{1,i}=i-1$, for $i\in\llbracket 1,p\rrbracket$.

For the other vectors, we proceed iteratively. For example, for determining $\mathbf V_k=(Y^H_{t_{k,1}},Y^H_{t_{k,2}},...,Y^H_{t_{k,p}})'$, with $k\geq 2$ and $t_{k,1}=k-1$, we must account for the prior selection of $\mathbf V_1,...,\mathbf V_{k-1}$ and thus maximize the Godambe information, with the formula of Theorem~\ref{thm:GodambeGauss}, instead of the Fisher information:
$$(t_{k,2},...,t_{k,p})=\argmax_{(t_2,...,t_p)\in\llbracket 0,N-1\rrbracket\setminus\{k-1\} ,t_2\neq ...\neq t_p} \mathcal J(\mathbf V_1,...,\mathbf V_{k-1},(Y^H_{k-1},Y^H_{t_2},...,Y^H_{t_p})';H).$$
As suggested by Figure~\ref{fig:GodambeDecr}, the numerical optimization concludes that the optimal vector also consists in consecutive observations: $t_{k,i}=k+i-2$, for $i\in\llbracket 1,p\rrbracket$. We stop the selection process when all the observations are included in a vector, that is once we have selected $\mathbf V_1,...,\mathbf V_{N_{\text{cl}}}$, with $N_{\text{cl}}=N+1-p$.

Our method thus provides evidence that, for the fGn, selecting all the consecutive sub-vectors is optimal in terms of stepwise information, under specific constraints on $p$ and $N_{\text{cl}}$. In the same spirit, we also propose a more parsimonious design, with the partition of $(Y^H_0,Y^H_1,...,Y^H_{p\lfloor N/p\rfloor-1})'$ in $\lfloor N/p\rfloor$ sub-vectors of size $p$ containing consecutive observations. The selection of consecutive observations also provides a substantial computational advantage, as the covariance matrix involved in the composite likelihood is identical across all sub-vectors. As a result, its determinant and inverse need to be computed only once.

Using either the consecutive or the partition design, we give a brief overview of the asymptotic behavior of the composite maximum likelihood estimator, when $n$ and $N_{\text{cl}}$ tend to infinity, for a fixed value of $p$. The following argument is intended to show the role played by the Godambe information in this context rather than provide a complete proof. First, in the fGn case, the scores evaluated in the true value of the parameter are pair functions of the sub-vectors of observations, as one can see in equation~\eqref{eq:scoreTrace}. They also have a zero expectation. Therefore, their Hermite rank is at least 2. We can then apply the vector version of the Breuer-Major theorem~\cite[Theorem 4]{Arcones}\cite{NPP} and find that the normalized composite score $N_{\text{cl}}^{-1/2}d\mathcal C(H)/dH$ converges in distribution, when $H<3/4$, toward a Gaussian variable of variance $\E[(\partial_H\ell(\mathbf V_1;H,\sigma^2))^2] + 2\sum_{k=1}^{\infty}\E[\partial_H\ell(\mathbf V_1;H,\sigma^2)\partial_H\ell(\mathbf V_{k+1};H,\sigma^2)]$. Moreover, under suitable regularity conditions ensuring the uniform convergence of the Hessian, Birkhoff's ergodic theorem in our fGn framework yields that $(1/N_{\text{cl}})d^2\mathcal C(\widehat H_{N_{\text{cl}}})/dH^2$ converges uniformly almost surely toward $-\mathcal I(\mathbf V_1;H)$, for any sequence $\widehat H_{N_{\text{cl}}}$ converging toward $H$. Going back to the composite maximum likelihood estimator $\widehat H_{\mathcal C,N_{\text{cl}}}$, it can be seen as an M-estimator, so that, applying a first-order Taylor expansion to $d\mathcal C(\widehat H_{\mathcal C,N_{\text{cl}}})/dH$ around $H$, we get 
$$\sqrt{N_{\text{cl}}}(\widehat H_{\mathcal C,N_{\text{cl}}} - H) = -\left(\frac{1}{N_{\text{cl}}}\frac{d^2\mathcal C(\widetilde H_{\mathcal C,N_{\text{cl}}})}{dH^2}\right)^{-1} \frac{1}{\sqrt{N_{\text{cl}}}}\frac{d\mathcal C(H)}{dH},$$
where $\widetilde H_{\mathcal C,N_{\text{cl}}}$ lies between $\widehat H_{\mathcal C,N_{\text{cl}}}$ and $H$. Slutsky's theorem thus concludes that the above equation converges weakly toward a centered Gaussian distribution whose variance is the limit of $(\mathcal J(\mathbf V_1,...,\mathbf V_{N_{\text{cl}}};H)/N_{\text{cl}})^{-1}$ when $N_{\text{cl}}$ tends to infinity.

\subsection{Information for an fGn derived from an fBm with arbitrary time scale}\label{sec:derivfGN}

The moment method does not provide a natural way of estimating the Hurst exponent of a stationary transformation of fractal processes, because these processes are no longer self-similar~\cite{GarcinLamperti,GarcinEstimLamp}. Therefore, for estimating the Hurst exponent of an fGn with the moment method, it is recommended to work instead with the aggregated version of the process, namely with the underlying fBm~\cite{JLM}. However, we show below that this is only possible when the fGn is sampled with a time step of 1.

Starting from a continuous-time fGn $Y^H_t$, defined by $Y^H_t = X^H_{t+1} - X^H_t$, where $X^H_t$ is an fBm, we can consider vectors of discrete-time observations of various types. For example, $(Y^H_0,Y^H_1,...,Y^H_{N-1})'$ easily provides us with regularly sampled observations of the latent fBm: $X^H_1=Y^H_0$ and, $\forall i\in\llbracket 2,N\rrbracket$, $X^H_i=\sum_{j=0}^{i-1} Y^H_j$. But if we observe instead $(\widetilde Y^H_0,\widetilde Y^H_1,...,\widetilde Y^H_{N-1})'$, where 
$$\widetilde Y^H_t = Y^H_{t\delta} = X^H_{t\delta+1} - X^H_{t\delta},$$ 
then, depending on the value of $\delta$, it may not be possible to recover the latent fBm by aggregation. Indeed, when $\delta<1$, two consecutive observations $\widetilde Y^H_i$ and $\widetilde Y^H_{i+1}$ correspond to overlapping increments of the underlying fBm. If $\delta>1$, aggregating $\widetilde Y^H_0,\widetilde Y^H_1,...,\widetilde Y^H_{N-1}$ is not enough since gaps arise between increments of the fBm. Therefore, $\delta=1$ is the only case for which we can build a vector of values of the fBm.

By contrast, likelihood methods, including composite ones, remain valid for any $\delta>0$, since the vector $(\widetilde Y^H_0,\widetilde Y^H_1,...,\widetilde Y^H_{N-1})'$ is centered and  Gaussian, with a pairwise covariance that can be expressed as
$$\Cov\left(\widetilde Y^H_i, \widetilde Y^H_j\right) = \frac{\sigma^2}{2}\left(\left|(i-j)\delta-1\right|^{2H} - 2\left|(i-j)\delta\right|^{2H} +\left|(i-j)\delta+1\right|^{2H}\right).$$
This expression allows one to construct the covariance matrix $\sigma^2 \mathbf{R}$ and to extend the results of this paper to the case of arbitrary time scale. Moreover, since the entries of the covariance matrix depend on the lags $(i-j)$, the resulting covariance matrix is also Toeplitz and Trench algorithms can be used for classical or composite likelihood inferences. 

When $H<1/2$, $Y^H_i$ and $Y^H_{j}$ are negatively correlated when $i\neq j$. But, when considering $\delta<1$, because of the overlap, $\widetilde Y^H_i$ and $\widetilde Y^H_{j}$ may be positively correlated for $j$ close to $i$, whereas the correlation remains negative when $|j-i|$ is large. Therefore, for a practical application of composite likelihood methods, it is recommended not to focus only on pairs of consecutive observations, which may lead to an overestimation of $H$, but to consider larger sub-vectors. In what follows, we apply to the estimation of the process $\widetilde Y^H_t$ the sequential optimal design that we introduced in Section~\ref{sec:Design} for the standard fGn.

\section{Simulation study}\label{sec:simul}

We compare composite likelihood methods with baseline approaches, namely the moment and the likelihood estimators, as described in Sections~\ref{sec:moment} and~\ref{sec:mle}. Data are generated according to an fGn. The quality of each estimator applied to a discrete-time trajectory depends on the size $N$ of the vector of observations, on the Hurst exponent $H$, and, for the composite methods, on the size $p$ of the sub-vectors. For each set of parameters and each estimation method, we retrieve bias and variance thanks to 400 independent trajectories.

First, we select $N=500$. Figure~\ref{fig:Simul_H} displays, for various $H\in(0,1)$, the bias and variance of the baseline estimators along with the composite estimator defined on all the sub-vectors of $p=15$ consecutive observations. One can also define composite estimators on non-overlapping sub-vectors. The lower accuracy of such an estimator can be compensated by increasing the size of the sub-vectors (here up to $p=25$), which proves useful when the generating process exhibits long memory, as one can see in the variance graph of Figure~\ref{fig:Simul_H}. The likelihood approach, with no surprise, shows the lowest bias and variance, whatever $H$. For $H$ lower than about 0.7, the implemented composite methods have a lower variance than the moment-based method but, for greater values of $H$, the moment-based approach shows a surprisingly low variance. This is to be balanced by the huge bias of this last method, compared to the one of composite approaches, so that our composite methods outperform the moment-based estimator for all values of $H$, when one considers the mean squared error (MSE).

\begin{figure}[htbp]
	\centering
		\includegraphics[width=0.45\textwidth]{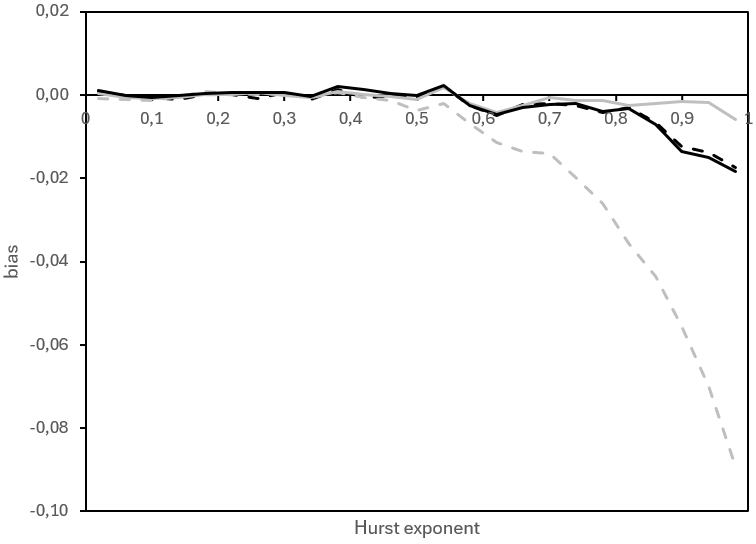}
		\includegraphics[width=0.45\textwidth]{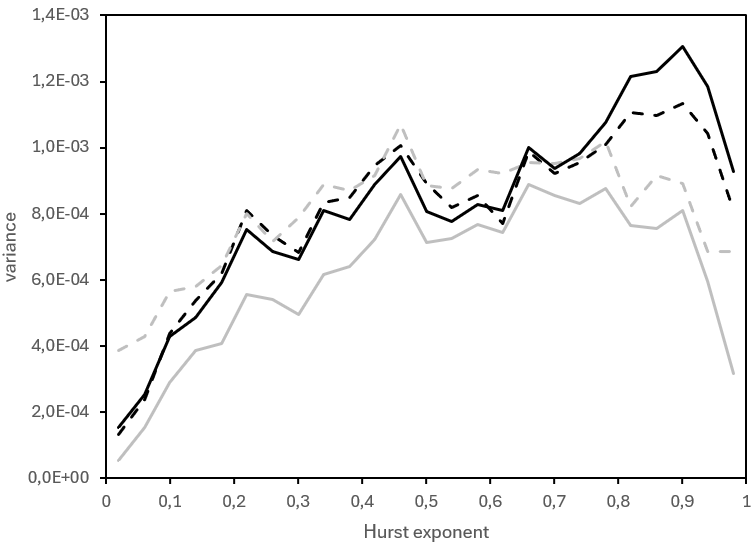} 
\begin{minipage}{0.7\textwidth}\caption{Bias (left graph) and variance (right graph) of the composite estimators with $p=15$ overlapping sub-vectors of consecutive observations (black solid line) or $p=25$ non-overlapping sub-vectors of consecutive observations (black dashed line), of the maximum likelihood estimator (grey solid line), and the moment-based estimator (grey dashed line), with $N=500$.}
	\label{fig:Simul_H}
\end{minipage}
\end{figure}

We did not include in Figure~\ref{fig:Simul_H} the second-order moment method, described in Section~\ref{sec:moment}. Indeed, despite a very small bias for all values of $H$, its variance is almost twice as large as that of the first-order estimator, resulting in an MSE approximately 1.8 times larger when $H\leq 0.7$. For $H\geq 0.8$, the second-order moment method outperforms the first-order estimator in terms of MSE; however, all the other methods shown in Figure~\ref{fig:Simul_H} still achieve substantially lower MSE values.

Though our implementation of the maximum likelihood estimator is accelerated using the Trench algorithm, this method remains slower than the others: 4 seconds for a single trajectory of 500 points, compared with 1 ms for the moment estimator, 264 ms (respectively 43 ms) for the composite likelihood estimator based on overlapping sub-vectors of size 15 (resp. non-overlapping sub-vectors of size 25). It is worth mentioning that a more classical composite estimator, namely the one defined on all possible pairs, requires a similar computational time (266 ms) while yielding a much larger MSE (on average $3.9\times 10^{-3}$ vs $8.6\times 10^{-4}$). 

Computational time is the main reason why the traditional maximum likelihood method cannot be used for large datasets. However, for large $N$, composite methods can still be compared with the moment estimator. Figure~\ref{fig:Simul_N} shows that the advantage of our composite estimators over moment-based estimators remains robust with respect to $N$.

\begin{figure}[htbp]
	\centering
		\includegraphics[width=0.45\textwidth]{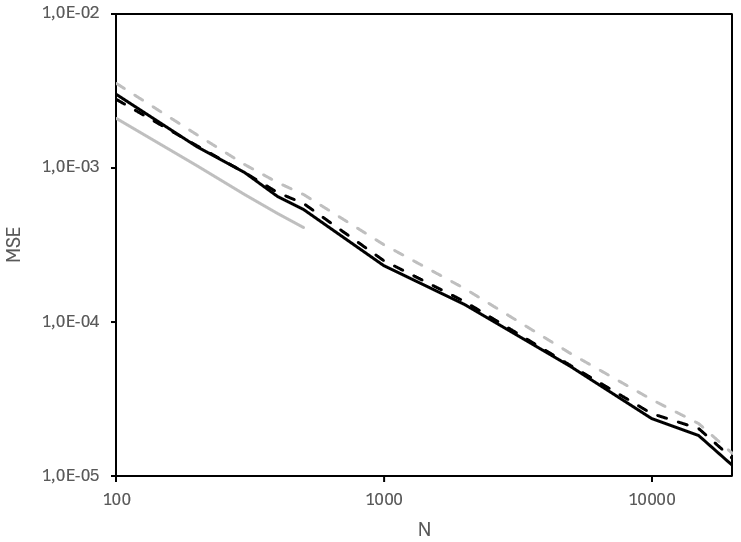}
		\includegraphics[width=0.45\textwidth]{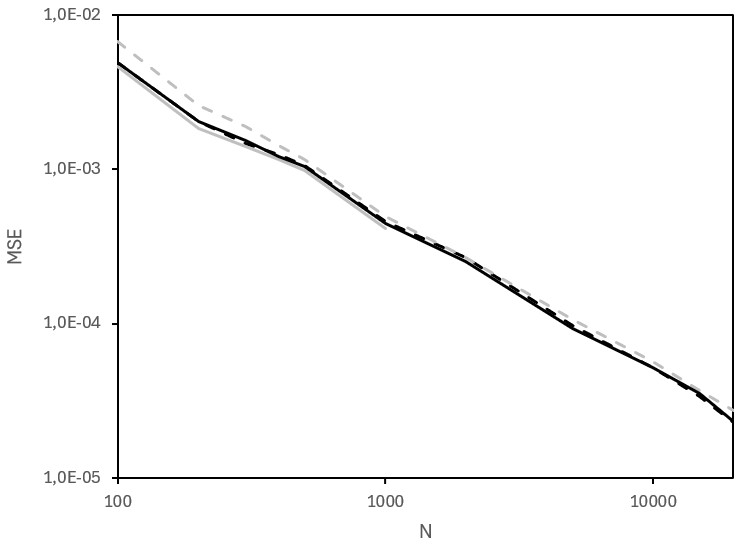} 
\begin{minipage}{0.7\textwidth}\caption{MSE, on a log-log scale, of the composite estimators with $p=15$ overlapping sub-vectors of consecutive observations (black solid line) or $p=25$ non-overlapping sub-vectors of consecutive observations (black dashed line), of the maximum likelihood estimator (grey solid line), and the moment-based estimator (grey dashed line), with $H=0.15$ (left graph) and $H=0.65$ (right graph).}
	\label{fig:Simul_N}
\end{minipage}
\end{figure}

The performance of composite estimators also depends on the size of the subsets. The larger $p$, the smaller the MSE, as one can see in Figure~\ref{fig:Simul_p}. Composite estimators become more accurate than moment estimators when $p$ is large enough: when $H=0.65$ it is from $p=4$ or $p=10$, whereas when $H=0.15$ it is from $p=15$ or $p=21$, depending on whether we consider overlapping or non-overlapping sub-vectors. The lower bound provided by the maximum likelihood estimator is reached asymptotically since it is a particular case of the composite method with $p=N$.

\begin{figure}[htbp]
	\centering
		\includegraphics[width=0.45\textwidth]{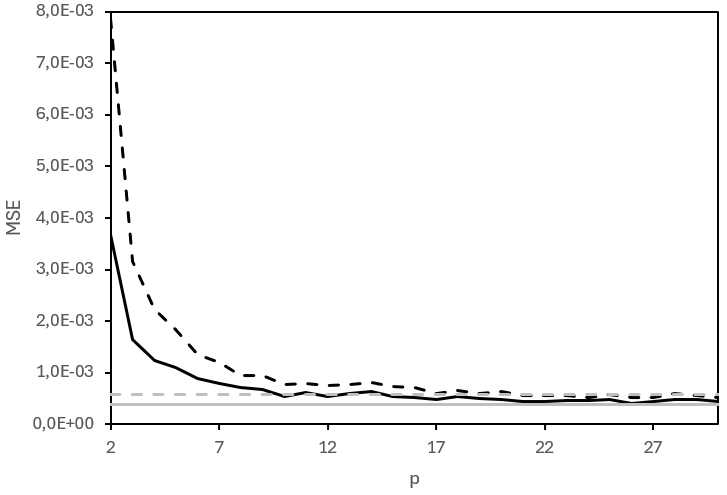}
		\includegraphics[width=0.45\textwidth]{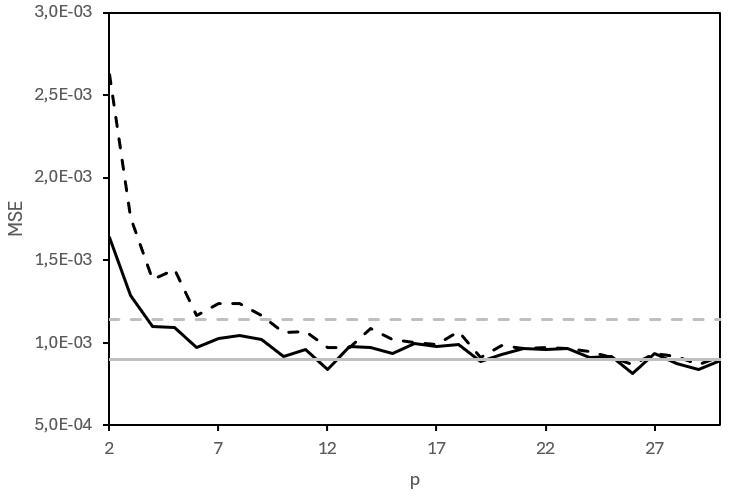} 
\begin{minipage}{0.7\textwidth}\caption{MSE of the composite estimators with overlapping (black solid line) or non-overlapping  (black dashed line) sub-vectors of consecutive observations, as a function of the size $p$ of each sub-vector, along with the MSE of the maximum likelihood estimator (grey solid line) and the moment-based estimator (grey dashed line), with $N=500$, $H=0.15$ (left graph), and $H=0.65$ (right graph).}
	\label{fig:Simul_p}
\end{minipage}
\end{figure}

\section{Empirical applications}\label{sec:appli}

We now consider several real-world datasets from financial markets and wind-speed measurements to illustrate the proposed method. For each daily sampled time series, we estimate the parameters of an fGn over a rolling window of 500 observations, using either the standard moment-based approach or the composite likelihood method. The latter is designed as introduced above, namely with all the possible sub-vectors of a fixed number of consecutive observations.

Since the true value of the $H$ parameter is unknown in empirical applications and since the model specification itself may be imperfect, the objective of this section is not to assess which method provides estimates closer to the true parameter. Instead, we investigate whether the two estimation procedures lead to different forecasting performances. If one method yields more accurate forecasts, this does not necessarily imply better inference of the true parameter; rather, it suggests that the method extracts predictive information from the data more effectively under the considered specification.

The forecasting method is consistent with the Gaussian fractional model~\cite{NP,GarcinForecast}, at a one-day horizon. The quality of the forecast is assessed with an MSE along with a Diebold-Mariano and West (DMW) test~\cite{DM,West} based on a squared-error loss. We use as well a hit ratio, defined as the proportion of good sign forecasts, together with a McNemar test.

\subsection{Volatility of financial markets}

We consider two kinds of volatility: the realized volatility of the S\&P 500 index, which is computed from historical price data sampled every 5 minutes in a single day, and the VIX, which is derived from the current price of a large set of options on the above stock index. The fBm is a widely used model for time series of log-volatility~\cite{CR}. Recent studies focus on moment-based estimation and find a Hurst parameter close to 0.1 for historical volatility~\cite{GJR,GG}, to 0.3 for volatility derived from a single option~\cite{LMP}, and to 0.4 for the VIX~\cite{ZC,GSV}.

The squared realized volatility series is imported from the formerly available Oxford-Man Institute of Quantitative Finance Realized Library and covers the period starting in January 2000 and finishing in November 2018. The VIX series is imported from Bloomberg over the period spanning from January 1996 to January 2026. We consider the logarithm of the two series and the composite likelihood approach is built on sub-vectors of 15 consecutive observations.

Gathering the estimates from all the windows of 500 observations, we infer a Hurst exponent of the realized volatility of 0.124 (respectively 0.129) in average, with standard deviation of 0.049 (resp. 0.063) for the composite likelihood approach (resp. the moment-based method). For the VIX, the estimated Hurst exponent is 0.423 (resp. 0.432) in average, with standard deviation of 0.025 (resp. 0.042). These values are consistent with the literature but it is worth mentioning that the composite likelihood approach seems to produce more stable estimates.

Table~\ref{tab:Vol} shows the results of the forecasting exercise. The two methods have close performance. The hit ratios do not differ significantly from each other, while the MSE indicates a better performance of the model estimated with the composite likelihood method.

\begin{table}[htbp]
\centering
\begin{tabular}{|l|c|c|c|c|c|}
\hline
\multirow{2}{*}{Series} & Estimation & \multicolumn{2}{c|}{hit ratio} & \multicolumn{2}{c|}{MSE} \\
\cline{3-6}
 & method & $\nu=5$ & $\nu=10$ & $\nu=5$ & $\nu=10$ \\
\hline
\multirow{2}{*}{RV} & Moments & $66.11\%$ & $66.35\%$ & 0.3610 & 0.3550 \\
\cline{2-6}
 & Composite L. & $66.20\%$ (0.394) & $66.49\%$ (0.303) & 0.3609 (-0.22) & 0.3545 (-1.12$^{\star}$) \\  
\hline
\multirow{2}{*}{VIX} & Moments & $52.23\%$ & $52.12\%$ & $4.946\times 10^{-3}$ & $4.939\times 10^{-3}$ \\
\cline{2-6}
 & Composite L. & $52.36\%$ (0.532) & $52.21\%$ (0.690) & $4.938\times 10^{-3}$ (-1.81$^{\star\star}$) & $4.928\times 10^{-3}$ (-2.12$^{\star\star\star}$) \\  
\hline
\end{tabular}
%\captionsetup{labelformat=empty}
\begin{minipage}{0.7\textwidth}\caption{For the realized volatility of the S\&P 500 index (RV) and the VIX, hit ratio and MSE with, in parenthesis, respectively the p-value of a McNemar test and the DMW statistic for the squared-error loss. The significance at $10\%$ ($^{\star}$), $5\%$ ($^{\star\star}$), and $1\%$ ($^{\star\star\star}$) is indicated for the DMW test. The forecasting model is an fGn, estimated either with moments or with a composite likelihood, and using $\nu\in\{5,10\}$ past daily observations for a forecast horizon of 1 day. Both the McNemar test and the DMW test are against the model estimated with moments and with the same $\nu$.}
\label{tab:Vol}
\end{minipage}
\end{table}

\subsection{Wind speed}

After the filtering of seasonal components, many models of wind speed assume stationary dynamics with serial dependence, typically within a fractional modeling framework. One can cite, for example, the fractional Ornstein-Uhlenbeck process~\cite{OADA,LLT}, the fGn~\cite{MS,HEAS}, or extensions like the generalized Cauchy process~\cite{LSZ}.

The data used in our study consist of 3-hour observations at Embrun (Hautes-Alpes, France), provided by M\'et\'eo-France between 1996 and 2024.\footnote{ \url{https://donneespubliques.meteofrance.fr/}, station 7591.} To reduce the seasonal effect, these observations are averaged at the daily scale. We model the logarithm of the series  with an fGn and the composite likelihood approach is built on sub-vectors of 25 consecutive observations. Moreover, for the composite likelihood method, we consider both the model in which observations are sample at a time step $\delta=1$ and the model with unknown $\delta$, as discussed in Section~\ref{sec:derivfGN}.

The estimated Hurst exponent is in average 0.740 for the moment method and 0.755 for the standard composite approach, with respective standard deviations of 0.036 and 0.031. When estimating also $\delta$, $H$ is 0.720 in average, with a standard deviation of 0.048, while $\delta$ is 0.94 in average, with a standard deviation of 0.20.

As one can see in Table~\ref{tab:Wind}, the moment method significantly outperforms the competing approaches, based on the hit ratio, only for $\nu=5$. In terms of MSE, the composite approach performs better regardless of $\nu$, although the difference is statistically significant only when estimating $\delta$.

\begin{table}[htbp]
\centering
\begin{tabular}{|l|c|c|c|c|}
\hline
Estimation & \multicolumn{2}{c|}{hit ratio} & \multicolumn{2}{c|}{MSE} \\
\cline{2-5}
method & $\nu=5$ & $\nu=10$ & $\nu=5$ & $\nu=10$ \\
\hline
Moments & $64.40\%$ & $64.57\%$ & 0.1767 & 0.1763 \\
\hline
Composite L. & $64.25\%$ (0.020) & $64.49\%$ (0.262) & 0.1766 (-0.16) & 0.1762 (-0.33) \\  
\hline
Composite L. $\delta$ & $64.12\%$ (0.196) & $64.71\%$ (0.505) & 0.1754 (-2.28$^{\star\star\star}$) & 0.1752 (-2.09$^{\star\star\star}$) \\  
\hline
\end{tabular}
%\captionsetup{labelformat=empty}
\begin{minipage}{0.7\textwidth}\caption{For the wind speed, hit ratio and MSE with, in parenthesis, respectively the p-value of a McNemar test and the DMW statistic for the squared-error loss. The significance at $10\%$ ($^{\star}$), $5\%$ ($^{\star\star}$), and $1\%$ ($^{\star\star\star}$) is indicated for the DMW test. The forecasting model is an fGn, estimated either with moments or with a composite likelihood (with time step $\delta$ fixed at 1 or estimated), and using $\nu\in\{5,10\}$ past daily observations for a forecast horizon of 1 day. Both the McNemar test and the DMW test are against the model estimated with moments and with the same $\nu$.}
\label{tab:Wind}
\end{minipage}
\end{table}

\section{Conclusion}\label{sec:concl}

As our simulations show, the composite likelihood method allows one to control the trade-off between estimation accuracy and computational complexity through the choice of the sub-vector size. Given this size, the key issue then becomes the design of the sub-vectors, namely how their components are selected. We have proposed an approach in which the optimality is seen as a stepwise maximization of the Godambe information. Introducing a theoretical expression of this information in the case of fractional Gaussian models, we were able to build a simple sequentially optimal design for the fBm and the fGn.

We have proposed as well a short application to real datasets. It must be seen as a motivation of such models and of the composite likelihood method. We know that multifractal extensions~\cite{BDM,SL}, or non-Gaussian ones, based on alpha-stable distributions~\cite{GSV} or even on jumps~\cite{AJC,BS}, are promising fractional alternatives of the fBm when modelling volatility. Similar multifractal~\cite{FF,CS} or alpha-stable~\cite{DSC} extensions are also considered for modelling wind speed. Interesting future research directions could thus consist in studying the sensitivity of the design to the above variations of the fBm.

\bibliographystyle{plain}
\bibliography{biblioEstimfBm}

\begin{thebibliography}{10}

\bibitem{AJC}
E.~Abi~Jaber and N.~De~Carvalho.
\newblock Reconciling rough volatility with jumps.
\newblock {\em S{IAM} journal on financial mathematics}, 15(3):785--823, 2024.

\bibitem{AW}
M.~Abt and W.J. Welch.
\newblock Fisher information and maximum-likelihood estimation of covariance
  parameters in {G}aussian stochastic processes.
\newblock {\em Canadian journal of statistics}, 26(1):127--137, 1998.

\bibitem{AG}
G.S. Ammar and W.B. Gragg.
\newblock Superfast solution of real positive definite {T}oeplitz systems.
\newblock {\em {SIAM} journal on matrix analysis and applications},
  9(1):61--76, 1988.

\bibitem{ACS}
M.~Anitescu, J.~Chen, and M.L. Stein.
\newblock An inversion-free estimating equations approach for {G}aussian
  process models.
\newblock {\em Journal of computational and graphical statistics},
  26(1):98--107, 2017.

\bibitem{Arcones}
M.A. Arcones.
\newblock Limit theorems for nonlinear functionals of a stationary {G}aussian
  sequence of vectors.
\newblock {\em Annals of probability}, pages 2242--2274, 1994.

\bibitem{BDM}
E.~Bacry, J.~Delour, and J.-F. Muzy.
\newblock Modelling financial time series using multifractal random walks.
\newblock {\em Physica {A}: statistical mechanics and its applications},
  299(1-2):84--92, 2001.

\bibitem{Beran}
J.~Beran.
\newblock {\em Statistics for long-memory processes}.
\newblock Routledge, 2017.

\bibitem{BeG}
M.~Bevilacqua and C.~Gaetan.
\newblock Comparing composite likelihood methods based on pairs for spatial
  {G}aussian random fields.
\newblock {\em Statistics and computing}, 25(5):877--892, 2015.

\bibitem{BS}
M.~Bibinger and M.~Sonntag.
\newblock Testing for jumps in processes with integral fractional part and
  jump-robust inference on the {H}urst exponent.
\newblock {\em To appear in Econometrics and statistics}, 2025.

\bibitem{BG}
X.~Brouty and M.~Garcin.
\newblock Fractal properties, information theory, and market efficiency.
\newblock {\em Chaos, solitons \& fractals}, 180:114543, 2024.

\bibitem{CS}
R.~Calif and F.G. Schmitt.
\newblock Modeling of atmospheric wind speed sequence using a lognormal
  continuous stochastic equation.
\newblock {\em Journal of wind engineering and industrial aerodynamics},
  109:1--8, 2012.

\bibitem{Cinkir}
Z.~Cinkir.
\newblock A fast elementary algorithm for computing the determinant of
  {T}oeplitz matrices.
\newblock {\em Journal of computational and applied mathematics}, 255:353--361,
  2014.

\bibitem{Coeurjolly2000}
J.-F. Coeurjolly.
\newblock Simulation and identification of the fractional {B}rownian motion: a
  bibliographical and comparative study.
\newblock {\em Journal of statistical software}, 5:1--53, 2000.

\bibitem{Coeurjolly2001}
J.-F. Coeurjolly.
\newblock Estimating the parameters of a fractional {B}rownian motion by
  discrete variations of its sample paths.
\newblock {\em Statistical inference for stochastic processes}, 4(2):199--227,
  2001.

\bibitem{CR}
F.~Comte and E.~Renault.
\newblock Long memory in continuous-time stochastic volatility models.
\newblock {\em Mathematical finance}, 8(4):291--323, 1998.

\bibitem{DM}
F.X. Diebold and R.S. Mariano.
\newblock Comparing predictive accuracy.
\newblock {\em Journal of business \& economic statistics}, 20(1):134--144,
  2002.

\bibitem{DSC}
S.~Duan, W.~Song, C.~Cattani, Y.~Yasen, and H.~Liu.
\newblock Fractional {L}\'evy stable and maximum {L}yapunov exponent for wind
  speed prediction.
\newblock {\em Symmetry}, 12(4):605, 2020.

\bibitem{FF}
K.~Falconer and C.~Fern\'andez.
\newblock Inference on fractal processes using multiresolution approximation.
\newblock {\em Biometrika}, 94(2):313--334, 2007.

\bibitem{Garcin2017}
M.~Garcin.
\newblock Estimation of time-dependent {H}urst exponents with variational
  smoothing and application to forecasting foreign exchange rates.
\newblock {\em Physica {A}: statistical mechanics and its applications},
  483:462--479, 2017.

\bibitem{GarcinLamperti}
M.~Garcin.
\newblock Hurst exponents and delampertized fractional {B}rownian motions.
\newblock {\em International journal of theoretical and applied finance},
  22(5):1950024, 2019.

\bibitem{GarcinEstimLamp}
M.~Garcin.
\newblock A comparison of maximum likelihood and absolute moments for the
  estimation of {H}urst exponents in a stationary framework.
\newblock {\em Communications in nonlinear science and numerical simulation},
  114:106610, 2022.

\bibitem{GarcinForecast}
M.~Garcin.
\newblock Forecasting with fractional {B}rownian motion: a financial
  perspective.
\newblock {\em Quantitative finance}, 22(8):1495--1512, 2022.

\bibitem{GG}
M.~Garcin and M.~Grasselli.
\newblock Long versus short time scales: the rough dilemma and beyond.
\newblock {\em Decisions in economics and finance}, 45(1):257--278, 2022.

\bibitem{GSV}
M.~Garcin, K.~Sawaya, and T.~Valade.
\newblock Prediction of linear fractional stable motions using codifference,
  with application to non-{G}aussian rough volatility.
\newblock {\em Preprint}, 2025.

\bibitem{GJR}
J.~Gatheral, T.~Jaisson, and M.~Rosenbaum.
\newblock Volatility is rough.
\newblock {\em Quantitative finance}, 18(6):933--949, 2018.

\bibitem{Harville}
D.A. Harville.
\newblock {\em Matrix algebra from a statistician's perspective}.
\newblock Springer, 1998.

\bibitem{HF}
Z.~Huang and D.~Ferrari.
\newblock Fast construction of optimal composite likelihoods.
\newblock {\em Statistica sinica}, 34(1):47--66, 2024.

\bibitem{HEAS}
S.~Hussain, A.~Elbergali, A.~Al-Masri, and G.~Shukur.
\newblock Parsimonious modelling, testing and forecasting of long-range
  dependence in wind speed.
\newblock {\em Environmetrics}, 15(2):155--171, 2004.

\bibitem{IL}
J.~Istas and G.~Lang.
\newblock Quadratic variations and estimation of the local {H}{\"o}lder index
  of a {G}aussian process.
\newblock {\em Annales de l'institut {H}enri {P}oincare ({B}) probability and
  statistics}, 33(4):407--436, 1997.

\bibitem{JHJ}
R.~Jennane, R.~Harba, and G.~Jacquet.
\newblock M{\'e}thodes d'analyse du mouvement brownien fractionnaire:
  th{\'e}orie et r{\'e}sultats comparatifs.
\newblock {\em Traitement du signal}, 18(5-6):419--436, 2001.

\bibitem{JLM}
H.-D.J. Jeong, J.-S.R. Lee, D.~McNickle, and K.~Pawlikowski.
\newblock Comparison of various estimators in simulated {FGN}.
\newblock {\em Simulation modelling practice and theory}, 15(9):1173--1191,
  2007.

\bibitem{JL}
H.~Joe and Y.~Lee.
\newblock On weighting of bivariate margins in pairwise likelihood.
\newblock {\em Journal of multivariate analysis}, 100(4):670--685, 2009.

\bibitem{KMS}
H.~Khalil, B.~Mourrain, and M.~Schatzman.
\newblock Superfast solution of {T}oeplitz systems based on syzygy reduction.
\newblock {\em Linear algebra and its applications}, 438(9):3563--3575, 2013.

\bibitem{LLT}
S.C. Lim, M.~Li, and L.P. Teo.
\newblock Langevin equation with two fractional orders.
\newblock {\em Physics letters {A}}, 372(42):6309--6320, 2008.

\bibitem{Lindsay}
B.G. Lindsay.
\newblock Composite likelihood methods.
\newblock In {\em Statistical inference from stochastic processes}, volume~80
  of {\em Contemporary mathematics}, pages 221--239. American mathematical
  society, 1988.

\bibitem{LYS}
B.G. Lindsay, G.Y. Yi, and J.~Sun.
\newblock Issues and strategies in the selection of composite likelihoods.
\newblock {\em Statistica sinica}, 21(1):71--105, 2011.

\bibitem{LSZ}
H.~Liu, W.~Song, and E.~Zio.
\newblock Generalized {C}auchy difference iterative forecasting model for wind
  speed based on fractal time series.
\newblock {\em Nonlinear dynamics}, 103:759--773, 2021.

\bibitem{LMP}
G.~Livieri, S.~Mouti, A.~Pallavicini, and M.~Rosenbaum.
\newblock Rough volatility: evidence from option prices.
\newblock {\em IISE transactions}, 50(9):767--776, 2018.

\bibitem{MvN}
B.B. Mandelbrot and J.W. van Ness.
\newblock Fractional {B}rownian motions, fractional noises and applications.
\newblock {\em {SIAM} review}, 10(4):422--437, 1968.

\bibitem{MKR}
G.~Mazo, D.~Karlis, and A.~Rau.
\newblock A randomized pairwise likelihood method for complex statistical
  inferences.
\newblock {\em Journal of the American statistical association},
  119(547):2317--2327, 2024.

\bibitem{MS}
M.M. Meerschaert and F.~Sabzikar.
\newblock Tempered fractional {B}rownian motion.
\newblock {\em Statistics \& probability letters}, 83(10):2269--2275, 2013.

\bibitem{NJK}
C.T. Ng, H.~Joe, D.~Karlis, and J.~Liu.
\newblock Composite likelihood for time series models with a latent
  autoregressive process.
\newblock {\em Statistica Sinica}, 21(1):279--305, 2011.

\bibitem{NPP}
I.~Nourdin, G.~Peccati, and M.~Podolskij.
\newblock Quantitative {B}reuer--{M}ajor theorems.
\newblock {\em Stochastic processes and their applications}, 121(4):793--812,
  2011.

\bibitem{NP}
C.J. Nuzman and H.V. Poor.
\newblock Linear estimation of self-similar processes via {L}amperti's
  transformation.
\newblock {\em Journal of applied probability}, 37(2):429--452, 2000.

\bibitem{OADA}
S.~Obukhov, E.M. Ahmed, D.Y. Davydov, T.~Alharbi, A.~Ibrahim, and Z.M. Ali.
\newblock Modeling wind speed based on fractional {O}rnstein-{U}hlenbeck
  process.
\newblock {\em Energies}, 14(17):5561, 2021.

\bibitem{PM}
I.~Papageorgiou and I.~Moustaki.
\newblock Sampling of pairs in pairwise likelihood estimation for latent
  variable models with categorical observed variables.
\newblock {\em Statistics and computing}, 29(2):351--365, 2019.

\bibitem{PT}
M.A. Poletti and P.D. Teal.
\newblock A superfast {T}oeplitz matrix inversion method for single- and
  multi-channel inverse filters and its application to room equalization.
\newblock {\em {IEEE}/{ACM} transactions on audio, speech, and language
  processing}, 29:3144--3157, 2021.

\bibitem{SL}
C.~Sattarhoff and T.~Lux.
\newblock Forecasting the variability of stock index returns with the
  multifractal random walk model for realized volatilities.
\newblock {\em International journal of forecasting}, 39(4):1678--1697, 2023.

\bibitem{SYZ}
S.~Shi, J.~Yu, and C.~Zhang.
\newblock Fractional {G}aussian noise: {S}pectral density and estimation
  methods.
\newblock {\em Journal of time series analysis}, 46(6):1146--1174, 2025.

\bibitem{SCA}
M.L. Stein, J.~Chen, and M.~Anitescu.
\newblock Stochastic approximation of score functions for {G}aussian processes.
\newblock {\em Annals of applied statistics}, 7(2):1162--1191, 2013.

\bibitem{Trench}
W.F. Trench.
\newblock An algorithm for the inversion of finite {T}oeplitz matrices.
\newblock {\em Journal of the society for industrial and applied mathematics},
  12(3):515--522, 1964.

\bibitem{VRF}
C.~Varin, N.~Reid, and D.~Firth.
\newblock An overview of composite likelihood methods.
\newblock {\em Statistica sinica}, 21(1):5--42, 2011.

\bibitem{VV}
C.~Varin and P.~Vidoni.
\newblock A note on composite likelihood inference and model selection.
\newblock {\em Biometrika}, 92(3):519--528, 2005.

\bibitem{West}
K.D. West.
\newblock Asymptotic inference about predictive ability.
\newblock {\em Econometrica: journal of the econometric society},
  64(5):1067--1084, 1996.

\bibitem{XRX}
L.~Xu, N.~Reid, and X.~Xu.
\newblock A note on information bias and efficiency of composite likelihood.
\newblock {\em Statistica sinica}, 34(1):523--526, 2024.

\bibitem{ZC}
Q.~Zhao and A.~Chronopoulou.
\newblock A new proxy for estimating the roughness of volatility.
\newblock {\em Journal of risk and financial management}, 17(4):131, 2024.

\bibitem{Zohar}
S.~Zohar.
\newblock Toeplitz matrix inversion: {T}he algorithm of {WF} {T}rench.
\newblock {\em Journal of the {ACM}}, 16(4):592--601, 1969.

\end{thebibliography}

\appendix

\section{Proof of Theorem~\ref{thm:Fisher_fGn}}\label{sec:proof_Fisher_fGn}

Before proving Theorem~\ref{thm:Fisher_fGn}, we introduce a useful lemma.

\begin{lem}\label{lem:deriv_g}
Let $H\in(0,1)$, $\tau>1$, and $g$ and $h$ be the functions defined, for $x\in[0,1/\tau]$, by
$$\left\{\begin{array}{ccl}
g(x) & = & (1+x)^{2H} - 2 + (1-x)^{2H} \\
h(x) & = & (1-x)^{2H}\ln{(1-x)} + (1+x)^{2H}\ln{(1+x)}.
\end{array}\right.$$
Then, their $n$-th derivative, for $n\in\mathbb N\setminus\{0\}$, is 
\begin{equation}\label{eq:systLemma_gh}
\left\{\begin{array}{ccl}
g^{(n)}(x) & = & \left((1+x)^{2H-n} + (-1)^n(1-x)^{2H-n}\right)A_n \\
h^{(n)}(x) & = &  (1+x)^{2H-n}(A_n\ln(1+x)+B_n)+(-1)^n(1-x)^{2H-n}(A_n\ln(1-x)+B_n),
\end{array}\right.
\end{equation}
where $A_n$ and $B_n$ are defined as in Theorem~\ref{thm:Fisher_fGn}, equation~\eqref{eq:FisherGauss_AB}. Moreover, if $n$ is odd, we have
$$\left\{\begin{array}{ccl}
|g^{(n)}(x)| & \le & \alpha_{g, n}x \\
|h^{(n)}(x)| & \le & \alpha_{h, n}x,
\end{array}\right.$$
with $\alpha_{g, n}$ and $\alpha_{h, n}$ like in Theorem~\ref{thm:Fisher_fGn}.
\end{lem}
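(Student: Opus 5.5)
The plan has two parts. First I establish the closed forms~\eqref{eq:systLemma_gh} by induction on $n$. Then I deduce the linear bounds for odd $n$ from the mean value theorem, using the fact that odd-order derivatives vanish at $x=0$.

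\emph{Derivative formulas.} For $n=1$, direct differentiation gives
$$g'(x)=2H\left((1+x)^{2H-1}-(1-x)^{2H-1}\right).$$
This matches the claimed formula since $A_1=2H$. Similarly,
$$h'(x)=(1+x)^{2H-1}\left(2H\ln(1+x)+1\right)-(1-x)^{2H-1}\left(2H\ln(1-x)+1\right),$$
which matches since $B_1=1$. For the inductive step, the two identities I use are
$$\frac{d}{dx}(1\pm x)^{2H-n}=\pm(2H-n)(1\pm x)^{2H-n-1},$$
$$\frac{d}{dx}\left[(1\pm x)^{2H-n}\ln(1\pm x)\right]=\pm(1\pm x)^{2H-n-1}\left((2H-n)\ln(1\pm x)+1\right).$$
Differentiating the formula at order $n$ then produces the factor $(2H-n)A_n=A_{n+1}$ in front of the logarithms. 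The constant terms collect into $(2H-n)B_n+A_n=B_{n+1}$. The sign in front of the $(1-x)$ terms changes from $(-1)^n$ to $(-1)^{n+1}$. This is exactly the recursion~\eqref{eq:FisherGauss_AB}. The constant $-2$ in $g$ disappears after the first derivative, so it plays no role. All of this is valid because $x\le 1/\tau<1$, so $1-x>0$.

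\emph{Bounds for odd $n$.} When $n$ is odd, evaluating~\eqref{eq:systLemma_gh} at $x=0$ gives
$$g^{(n)}(0)=(1-1)A_n=0, \qquad h^{(n)}(0)=B_n-B_n=0,$$
because the logarithms vanish at $0$. The mean value theorem then gives
$$|g^{(n)}(x)|\le x\sup_{y\in[0,1/\tau]}|g^{(n+1)}(y)|,$$
and the same for $h$. Since $n+1$ is even, the sign factor in front of the $(1-y)$ terms of $g^{(n+1)}$ and $h^{(n+1)}$ equals $+1$. The exponent is $2H-n-1\le 2H-2<0$, because $n\ge1$ and $H<1$. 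Hence, for $y\in[0,1/\tau]$,
$$(1+y)^{2H-n-1}\le 1\le (1-y)^{2H-n-1}\le (1-\tau^{-1})^{2H-n-1}.$$
This yields $|g^{(n+1)}(y)|\le 2(1-\tau^{-1})^{2H-n-1}|A_{n+1}|$, which is $\alpha_{g,n}$.

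For $h^{(n+1)}$ I also need to control the logarithms. On $[0,1/\tau]$ we have
$$0\le\ln(1+y)\le-\ln(1-y)\le|\ln(1-\tau^{-1})|.$$
Each of the two terms of $h^{(n+1)}(y)$ is therefore bounded by $(1-\tau^{-1})^{2H-n-1}\left(|A_{n+1}\ln(1-\tau^{-1})|+|B_{n+1}|\right)$. Summing the two terms gives $\alpha_{h,n}$.

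The main point requiring care is this last step. One has to check that bounding both power factors by the $(1-y)$ one is legitimate, which relies on the exponent being negative, and that $\ln(1+y)$ is dominated by $|\ln(1-y)|$. The rest is a routine induction.
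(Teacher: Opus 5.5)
Your proof is correct and follows the paper's argument: an induction for the closed forms, then the mean value theorem for odd $n$. The only cosmetic difference is where you apply the mean value theorem. You apply it to $g^{(n)}$ and $h^{(n)}$ on $[0,x]$, using their vanishing at $0$ and the even-order formula at $n+1$, whereas the paper applies it to $y\mapsto A_n y^{2H-n}$ and $y\mapsto y^{2H-n}(A_n\ln y+B_n)$ on $[1-x,1+x]$; both give exactly the constants $\alpha_{g,n}$ and $\alpha_{h,n}$.
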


\begin{proof}
We first prove equation~\eqref{eq:systLemma_gh} by induction. A straightforward differentiation provides us with the case $n=1$. Differentiating the $n$-th derivative then gives
$$
g^{(n+1)}(x) =A_n(2H-n) \left((1+x)^{2H-(n+1)}+(-1)^{n+1}(1-x)^{2H-(n+1)}\right)
$$
and
$$\begin{array}{ccl}
h^{(n+1)}(x) & = & (1+x)^{2H-(n+1)}\left((2H-n)A_n\ln(1+x)+(2H-n)B_n+A_n\right) \\
 & & + (-1)^{n+1}(1-x)^{2H-(n+1)}\left((2H-n)A_n\ln(1-x)+(2H-n)B_n+A_n\right).
\end{array}$$
Noting the definition of the sequences $A_n$ and $B_n$ in equation~\eqref{eq:FisherGauss_AB}, we get an expression consistent with equation~\eqref{eq:systLemma_gh}, thus proving the induction.

When $n$ is odd, we apply the mean value theorem to $y\mapsto y^{2H-n}A_n$ and $y\mapsto y^{2H-n}(A_n\ln(y)+B_n)$ on $[1-x,1+x]$: there exists $c,c'\in(1-x,1+x)\subset(1-\tau^{-1},1+\tau^{-1})$ such that
$$\left\{\begin{array}{ccl}
g^{(n)}(x) & = & 2xA_{n+1}c^{2H-n-1} \\
h^{(n)}(x) & = & 2x(A_{n+1}\ln(c')+B_{n+1})(c')^{2H-n-1}.
\end{array}\right.$$
Hence
$$\left\{\begin{array}{ccl}
|g^{(n)}(x)| & \le & 2|A_{n+1}|(1-\tau^{-1})^{2H-n-1}x = \alpha_{g,n}x \\
|h^{(n)}(x)| & \le & 2x\left(|A_{n+1}\ln(1-\tau^{-1})| +|B_{n+1}|\right)(1-\tau^{-1})^{2H-n-1} = \alpha_{h,n}x.
\end{array}\right.$$
\end{proof}

We can now prove Theorem~\ref{thm:Fisher_fGn}.

\begin{proof}
We note that the inverse and the derivative of the correlation matrix $\mathbf R$ of $Y_{t}$ and $Y_{t+\tau}$ are respectively 
$$\mathbf R^{-1}=\frac{1}{1-\rho_{H,\tau}^2}\left(\begin{array}{cc}
1 & -\rho_{H,\tau} \\
-\rho_{H,\tau} & 1
\end{array}\right) \ \text{and} \ \mathbf R_H=\left(\begin{array}{cc}
0 & \partial_H\rho_{H,\tau} \\
\partial_H\rho_{H,\tau} & 0
\end{array}\right).$$
Using these two expressions along with equation~\eqref{eq:infoFisher_trace}, we directly get equation~\eqref{eq:Fisher_Gauss}.

In the fGn case, we have $\rho_{H,\tau}=\tau^{2H}g(1/\tau)/2$ and $\partial_H \rho_{H,\tau}=\tau^{2H}\left[\ln(\tau)g(1/\tau)+h(1/\tau)\right]$, where $g$ and $h$ are defined as in the statement of Lemma~\ref{lem:deriv_g}. Noting that, after Lemma~\ref{lem:deriv_g}, we have $g(0)=g'(0)=0$, the second-order Taylor expansion of $g$ simply writes
$$g(x)= 2H(2H-1)x^2 + \int_0^x \frac{g^{(3)}(t)}{2}(x-t)^2\,dt.$$
Similarly, using again Lemma~\ref{lem:deriv_g}, we get
$$h(x)=(4H-1)x^2+ \int_0^x \frac{h^{(3)}(t)}{2}(x-t)^2\,dt.$$
We can now write the expansion $\rho_{\tau}=a_{\tau}+r_{\tau}$ and $\partial_H\rho_{\tau}=b_{\tau}+s_{\tau}$, with the remainders 
$$\left\{\begin{array}{ccl}
r_{\tau} & = & \frac{\tau^{2H}}{4}\int_0^{1/\tau}g^{(3)}(t)\left(\frac1\tau-t\right)^2\,dt \\
s_{\tau} & = & 2r_{\tau}\ln \tau + \frac{\tau^{2H}}{2}\int_0^{1/\tau}h^{(3)}(t)\left(\frac1\tau-t\right)^2\,dt.
\end{array}\right.$$
From Lemma~\ref{lem:deriv_g}, we have the bounds $|g^{(3)}(t)| \le \alpha_{g,3}\, t$ and $|h^{(3)}(t)| \le \alpha_{h,3}\, t$, for $t\in[0,1/\tau]$, so that
\begin{equation}\label{eq:borne_r_th1}
|r_{\tau}|\leq \frac{\tau^{2H}}{4}\alpha_{g,3}
\int_0^{1/\tau}t\left(\frac1\tau-t\right)^2\,dt =
\frac{\alpha_{g,3}}{48}\,\tau^{2H-4},
\end{equation}
and, using the same kind of calculation along with the triangle inequality,
$$|s_{\tau}|\leq\frac{\alpha_{g,3}}{24}\tau^{2H-4}\ln(\tau) + \frac{\alpha_{h,3}}{24}\tau^{2H-4}=\tau^{2H-4}M_{\tau}.$$

Going back to the expression of the Fisher information and introducing $F(x)=(1+x^2)/(1-x^2)^2$, we can insert the above expansions of $\rho_{H,\tau}$ and $\partial_H \rho_{H,\tau}$:
$$\mathcal I(Y_t,Y_{t+\tau};H)=(b_{\tau}+s_{\tau})^2F(a_{\tau}+r_{\tau}).$$
We want to approximate it with $b_{\tau}^2F(a_{\tau})$ and need to bound the remainder $\mathcal R_{\tau}=\mathcal I(Y_t,Y_{t+\tau};H)-b_{\tau}^2F(a_{\tau})$. By the triangle inequality, we have
\begin{equation}\label{eq:proofInfoBorneFGN}
|\mathcal R_{\tau}|
\le
|s_{\tau}|(2|b_{\tau}|+|s_{\tau}|)F(a_{\tau}+r_{\tau})
+
b_{\tau}^2\,|F(a_{\tau}+r_{\tau})-F(a_{\tau})|.
\end{equation}
We note that $F$ is an increasing function in $[0,1)$, with a continuous and increasing derivative. Therefore, noting 
$$q_{\tau}=|a_{\tau}|+\left|\frac{\alpha_{g,3}}{48}\,\tau^{2H-4}\right|,$$
we have $|a_{\tau}+r_{\tau}|\leq|a_{\tau}|+|r_{\tau}|\leq q_{\tau}$, according to equation~\eqref{eq:borne_r_th1}, and $F(a_{\tau}+r_{\tau})\le F(q_{\tau})$ as well as, using a Taylor expansion, $|F(a_{\tau}+r_{\tau})-F(a_{\tau})| \le |r_{\tau}|F'(q_{\tau})$. Combining these two inequalities with formula~\eqref{eq:proofInfoBorneFGN} and with $F'(x)=2x(3+x^2)/(1-x^2)^3$ leads, after factorizing by $\tau^{2H-4}$, to the bound of $\mathcal R_{\tau}$ provided in Theorem~\ref{thm:Fisher_fGn}.
\end{proof}

\section{Proof of Theorem~\ref{thm:GodambeGauss}}\label{sec:proof_GodambeGauss}

Before proving the theorem, we need the following lemma.

\begin{lem}\label{lem:EVAVVBV}
Let $\mathbf V_1$ and $\mathbf V_2$ be two vectors of Gaussian variables, of size respectively $p_1$ and $p_2$, of covariance $\sigma^2\mathbf{R}_1$ and $\sigma^2\mathbf{R}_2$, and of cross-covariance matrix $\sigma^2\bm\Lambda_{1,2}=\sigma^2\bm\Lambda_{2,1}'$. Considering we are given $\mathbf A$ and $\mathbf B$, symmetric matrices of size $p_1\times p_1$ and $p_2\times p_2$, we have
$$\E\left[\mathbf V_1'\mathbf A\mathbf V_1\mathbf V_2'\mathbf B\mathbf V_2\right]=\sigma^4\tr(\mathbf A\mathbf{R}_1)\tr(\mathbf B\mathbf{R}_2)+2\sigma^4\tr(\mathbf A\bm\Lambda_{1,2}\mathbf B\bm\Lambda_{2,1}).$$
In the particular case where $\mathbf V_1=\mathbf V_2$, we get
$$\E\left[\mathbf V_1'\mathbf A\mathbf V_1\mathbf V_1'\mathbf B\mathbf V_1\right]=\sigma^4\tr(\mathbf A\mathbf{R}_1)\tr(\mathbf B\mathbf{R}_1)+2\sigma^4\tr(\mathbf A\mathbf{R}_1 \mathbf B\mathbf{R}_1).$$
\end{lem}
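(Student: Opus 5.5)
The natural route is Isserlis' (Wick's) theorem for fourth moments of centered Gaussian variables. Write $\mathbf V_1=(U_1,\dots,U_{p_1})'$ and $\mathbf V_2=(W_1,\dots,W_{p_2})'$. Expanding both quadratic forms gives
$$\E\left[\mathbf V_1'\mathbf A\mathbf V_1\mathbf V_2'\mathbf B\mathbf V_2\right]=\sum_{i,j=1}^{p_1}\sum_{k,l=1}^{p_2}(\mathbf A)_{ij}(\mathbf B)_{kl}\E[U_iU_jW_kW_l].$$

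Isserlis' theorem applies because $(U_i,U_j,W_k,W_l)$ is a centered Gaussian vector. It gives
$$\E[U_iU_jW_kW_l]=\E[U_iU_j]\E[W_kW_l]+\E[U_iW_k]\E[U_jW_l]+\E[U_iW_l]\E[U_jW_k].$$
These covariances are read from the assumptions:
\begin{itemize}
\item $\E[U_iU_j]=\sigma^2(\mathbf R_1)_{ij}$;
\item $\E[W_kW_l]=\sigma^2(\mathbf R_2)_{kl}$;
\item $\E[U_iW_k]=\sigma^2(\bm\Lambda_{1,2})_{ik}=\sigma^2(\bm\Lambda_{2,1})_{ki}$.
\end{itemize}

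The three pairings are then summed separately.

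\textbf{First pairing.} It factorizes into $\sigma^4\big(\sum_{i,j}(\mathbf A)_{ij}(\mathbf R_1)_{ji}\big)\big(\sum_{k,l}(\mathbf B)_{kl}(\mathbf R_2)_{lk}\big)$, using the symmetry of $\mathbf R_1$ and $\mathbf R_2$. This equals $\sigma^4\tr(\mathbf A\mathbf R_1)\tr(\mathbf B\mathbf R_2)$, exactly as in equation~\eqref{eq:scoreVMV}.

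\textbf{Second pairing.} It gives $\sigma^4\sum_{i,j,k,l}(\mathbf A)_{ij}(\mathbf B)_{kl}(\bm\Lambda_{1,2})_{ik}(\bm\Lambda_{1,2})_{jl}$. To close this index loop into a trace, use $(\mathbf A)_{ij}=(\mathbf A)_{ji}$ and $(\bm\Lambda_{1,2})_{jl}=(\bm\Lambda_{2,1})_{lj}$. The sum then becomes $\sum(\mathbf A)_{ji}(\bm\Lambda_{1,2})_{ik}(\mathbf B)_{kl}(\bm\Lambda_{2,1})_{lj}=\tr(\mathbf A\bm\Lambda_{1,2}\mathbf B\bm\Lambda_{2,1})$.

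\textbf{Third pairing.} This is the step needing the most care. It is the same sum with $k$ and $l$ swapped. Since $\mathbf B$ is symmetric, $(\mathbf B)_{kl}=(\mathbf B)_{lk}$, so relabelling $k\leftrightarrow l$ shows it equals the second pairing. This accounts for the factor $2$.

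The only real point to check throughout is that the symmetry assumptions on $\mathbf A$ and $\mathbf B$ are invoked at the right places, both to orient the index cycle into a genuine trace and to identify the two cross pairings.

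For the particular case $\mathbf V_1=\mathbf V_2$, the cross-covariance is $\sigma^2\bm\Lambda_{1,2}=\sigma^2\mathbf R_1=\sigma^2\bm\Lambda_{2,1}$. Substituting into the general formula gives the second identity immediately.
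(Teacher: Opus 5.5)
Your proof is correct, and it takes a different route from the paper's. You apply Isserlis' theorem directly to the correlated entries of $(\mathbf V_1,\mathbf V_2)$ and evaluate the three pairings index by index.

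The paper instead embeds both vectors in a common vector $\mathbf V=\sigma\mathbf L\mathbf U$, with $\mathbf R=\mathbf L\mathbf L'$ and $\mathbf U$ i.i.d.\ standard Gaussian. It first proves $\E[\mathbf U'\tilde{\mathbf A}\mathbf U\mathbf U'\tilde{\mathbf B}\mathbf U]=\tr(\tilde{\mathbf A})\tr(\tilde{\mathbf B})+2\tr(\tilde{\mathbf A}\tilde{\mathbf B})$ by elementary moment counting ($\E[U_i^4]=3$, $\E[U_i^2U_k^2]=1$). It then transports this back with $\tilde{\mathbf A}=\mathbf L'\mathbf S_1'\mathbf A\mathbf S_1\mathbf L$ and $\tilde{\mathbf B}=\mathbf L'\mathbf S_2'\mathbf B\mathbf S_2\mathbf L$, using cyclicity of the trace.

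What each route buys:
\begin{itemize}
\item The paper's argument is self-contained: it proves exactly the whitened case of Isserlis it needs. The cross term $\tr(\mathbf A\bm\Lambda_{1,2}\mathbf B\bm\Lambda_{2,1})$ then falls out of trace cyclicity without index bookkeeping. The price is the ambient vector, the selection matrices and the factorization step.
\item Your argument takes Isserlis as a black box, but needs only that $(\mathbf V_1,\mathbf V_2)$ be jointly centered Gaussian. It requires no embedding or factorization.
\item Your argument also makes explicit where each symmetry hypothesis enters. Symmetry of $\mathbf A$ closes the index loop into a genuine trace, and symmetry of $\mathbf B$ identifies the two cross pairings, which produces the factor $2$.
\end{itemize}
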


\begin{proof}
We write $\mathbf V_1$ and $\mathbf V_2$ as sub-vectors of a Gaussian vector $\mathbf V$ of covariance $\sigma^2\mathbf{R}$: $\mathbf V_1=\mathbf S_1\mathbf V$ and $\mathbf V_2=\mathbf S_2\mathbf V$. We can write the Cholesky decomposition $\mathbf{R}=\mathbf L\mathbf L'$, so that $\mathbf V=\sigma\mathbf L\mathbf U$, where $\mathbf U=(U_1,...,U_n)$ is a vector of $n$ i.i.d. standard Gaussian variables. Expanding the following matrix product, we get
$$\mathbf U'\mathbf A\mathbf U\mathbf U'\mathbf B\mathbf U=\sum_{i,j}U_iU_j(\mathbf A)_{ij}\sum_{k,l}U_kU_l(\mathbf B)_{kl},$$
so that, by linearity
$$\E\left[\mathbf U'\mathbf A\mathbf U\mathbf U'\mathbf B\mathbf U\right]=\sum_{i,j,k,l}(\mathbf A)_{ij}(\mathbf B)_{kl}\E\left[U_iU_jU_kU_l\right].$$
Thanks to the i.i.d. standard Gaussian assumption, we have 
$$\E\left[U_iU_jU_kU_l\right]=\left\{\begin{array}{llc}
3 & \text{if} & i=j=k=l \\
1 & \text{if} & i=j\neq k=l \\
 & \text{or if} & i=k\neq j=l \\
 & \text{or if} & i=l\neq j=k \\
0 & \text{else.} & 
\end{array}\right.$$
Therefore,
$$\begin{array}{ccl}
\E\left[\mathbf U'\mathbf A\mathbf U\mathbf U'\mathbf B\mathbf U\right] & = & 3\sum_i (\mathbf A)_{ii}(\mathbf B)_{ii} + \sum_{i,k\neq i} (\mathbf A)_{ii}(\mathbf B)_{kk} + \sum_{i,j\neq i} (\mathbf A)_{ij}(\mathbf B)_{ij} + \sum_{i,j\neq i} (\mathbf A)_{ij}(\mathbf B)_{ji} \\
 & = & \sum_{i,k} (\mathbf A)_{ii}(\mathbf B)_{kk} + 2\sum_{i,j} (\mathbf A)_{ij}(\mathbf B)_{ji} \\
 & = & \tr(\mathbf A)\tr(\mathbf B)+2\tr(\mathbf A\mathbf B),
\end{array}$$
where we used the symmetry of $B$ in the second line. Going back to $\mathbf V_1=\sigma\mathbf S_1\mathbf L\mathbf U$ and $\mathbf V_2=\sigma\mathbf S_2\mathbf L\mathbf U$, we have
$$\begin{array}{ccl}
\E\left[\mathbf V_1'\mathbf A\mathbf V_1\mathbf V_2'\mathbf B\mathbf V_2\right] & = & \sigma^4\E\left[\mathbf U'\mathbf L'\mathbf S_1'\mathbf A\mathbf S_1\mathbf L\mathbf U\mathbf U'\mathbf L'\mathbf S_2'\mathbf B\mathbf S_2\mathbf L\mathbf U\right] \\
 & = & \sigma^4\tr(\mathbf L'\mathbf S_1'\mathbf A\mathbf S_1\mathbf L)\tr(\mathbf L'\mathbf S_2'\mathbf B\mathbf S_2\mathbf L)+2\sigma^4\tr(\mathbf L'\mathbf S_1'\mathbf A\mathbf S_1\mathbf L\mathbf L'\mathbf S_2'\mathbf B\mathbf S_2\mathbf L) \\
 & = & \sigma^4\tr(\mathbf A\mathbf S_1\mathbf{R} \mathbf S_1')\tr(\mathbf B\mathbf S_2\mathbf{R} \mathbf S_2')+2\sigma^4\tr(\mathbf A\mathbf S_1\mathbf{R} \mathbf S_2'\mathbf B\mathbf S_2\mathbf{R} \mathbf S_1').
\end{array}$$
Noting that $\mathbf{R}_1=\mathbf S_1\mathbf{R} \mathbf S_1'$, $\mathbf{R}_2=\mathbf S_2\mathbf{R} \mathbf S_2'$, $\bm\Lambda_{1,2}=\mathbf S_1\mathbf{R} \mathbf S_2'$, and that $\bm\Lambda_{2,1}=\mathbf S_2\mathbf{R} \mathbf S_1'$, we get the expression given in the Lemma.
\end{proof}

We can now prove Theorem~\ref{thm:GodambeGauss}.

\begin{proof}
The covariance between two scores (or derivatives of $\ell$ with respect to $H$ only) is, thanks to the score formula provided in equation~\eqref{eq:scorefinal_trace},
\begin{equation}\label{eq:traceProduitScores}
\begin{array}{ccl}
\E\left[u(\mathbf V_j;H)u(\mathbf V_k;H)\right] & = & \frac{1}{4}\tr(\mathbf{R}_j^{-1}\mathbf{R}_{j,H})\tr(\mathbf{R}_k^{-1}\mathbf{R}_{k,H}) \\
& & -\frac{1}{4\sigma^2}\tr(\mathbf{R}_j^{-1}\mathbf{R}_{j,H})\E\left[\mathbf V_k'\mathbf{R}_k^{-1}\mathbf{R}_{k,H}\mathbf{R}_k^{-1}\mathbf V_k\right] \\
 & & -\frac{1}{4\sigma^2}\tr(\mathbf{R}_k^{-1}\mathbf{R}_{k,H})\E\left[\mathbf V_j'\mathbf{R}_j^{-1}\mathbf{R}_{j,H}\mathbf{R}_j^{-1}\mathbf V_j\right] \\
  & & +\frac{1}{4\sigma^4}\E\left[\mathbf V_j'\mathbf{R}_j^{-1}\mathbf{R}_{j,H}\mathbf{R}_j^{-1}\mathbf V_j\mathbf V_k'\mathbf{R}_k^{-1}\mathbf{R}_{k,H}\mathbf{R}_k^{-1}\mathbf V_k\right] \\
 & = & \frac{1}{2}\tr(\mathbf{R}_j^{-1}\mathbf{R}_{j,H}\mathbf{R}_j^{-1}\bm\Lambda_{j,k}\mathbf{R}_k^{-1}\mathbf{R}_{k,H}\mathbf{R}_k^{-1}\bm\Lambda_{k,j}),
\end{array}
\end{equation}
where we used equation~\eqref{eq:scoreVMV} and Lemma~\ref{lem:EVAVVBV}. In the particular case where $j=k$, we have $\bm\Lambda_{j,k}=\mathbf R_j=\mathbf R_k$ and the above expression with the trace is equal to the variance of the score and to the Fisher information provided in equation~\eqref{eq:infoFisher_trace}. Finally, the Godambe information with respect to the parameter $H$, which we give in Theorem~\ref{thm:GodambeGauss}, is obtained by inserting equation~\eqref{eq:traceProduitScores} into equation~\eqref{eq:GodambeUnSeulParam}.

In the stationary case where the vectors are of size 2, with the same time lag $\tau$ between their two observations, the numerator of the Godambe information becomes, using equation~\eqref{eq:infoFisher_trace} together with Theorem~\ref{thm:Fisher_fGn}:
$$\left(\sum_{k=1}^{N_{cl}}
\operatorname{tr}
(\mathbf{R}_k^{-1}\mathbf{R}_{k,H}\mathbf{R}_k^{-1}\mathbf{R}_{k,H})\right)^2 = 4N_{cl}^2\frac{(\partial_H\rho_{H,\tau})^{4}(1+\rho_{H,\tau}^{2})^2}{(1 - \rho_{H,\tau}^{2})^4}.$$
Regarding the denominator of the Godambe information, noting that 
$$\mathbf{\Lambda}_{j,k}=
\begin{pmatrix}
\rho_{H,t_j-t_k} & \rho_{H,t_j-t_k - \tau} \\
\rho_{H,t_j-t_k +\tau} & \rho_{H,t_j-t_k}
\end{pmatrix}$$
and that, for all $k$, 
$$\mathbf{R}_k^{-1}\mathbf{R}_{k, H}\mathbf{R}_k^{-1}
=
\frac{\partial_H \rho_{H,\tau}}{(1-\rho_{H,\tau}^{\,2})^2}
\begin{pmatrix}
-2\rho_{H,\tau} & 1+ \rho^2_{H,\tau} \\
1+ \rho^2_{H,\tau} & -2\rho_{H,\tau}
\end{pmatrix},$$
we obtain
$$\mathcal{J}\left((\mathbf{V_k})_{k\in\llbracket 1, N_{cl}\rrbracket} ; H\right)=\frac{2 N_{cl}^2(\partial_H \rho_{H,\tau})^2
(1+\rho_{H,\tau}^2)^2}{\sum_{j=1}^{N_{\mathrm{cl}}}\sum_{k=1}^{N_{\mathrm{cl}}}T_{t_j-t_k,H,\tau}}.$$
Finally, since $T_{\theta,H,\tau}=T_{-\theta,H,\tau}$ for all $\theta$, we get equation~\eqref{eq:Godambe_p2}. 
\end{proof}

\end{document}